\documentclass[journal,twocolumn]{IEEEtran}
\usepackage[dvipdfmx]{graphicx}

\usepackage{amsmath}
\usepackage{amssymb}
\usepackage{color}
\DeclareMathAlphabet{\bm}{OML}{cmm}{b}{it}
\newtheorem{theorem}{Theorem}
\newtheorem{lemma}[theorem]{Lemma}
\newtheorem{definition}[theorem]{Definition}
\newtheorem{corollary}[theorem]{Corollary}
\newtheorem{remark}[theorem]{Remark}
\newtheorem{proposition}[theorem]{Proposition}
\newcommand{\qed}{\hfill \IEEEQED}

\newcommand{\markov}{\leftrightarrow}

\newcommand{\bol}[1]{\mathbf{#1}}




\hyphenation{op-tical net-works semi-conduc-tor}

\allowdisplaybreaks[2]

\begin{document}

\title{Broadcast Channels with Confidential Messages by Randomness Constrained Stochastic 
Encoder}

\author{Shun~Watanabe~\IEEEmembership{Member,~IEEE}       
\thanks{The first author is with the Department
of Information Science and Intelligent Systems, 
University of Tokushima,
2-1, Minami-josanjima, Tokushima,
770-8506, Japan, 
e-mail:shun-wata@is.tokushima-u.ac.jp.}
and Yasutada~Oohama~\IEEEmembership{Member,~IEEE}       
\thanks{The second author is with the Department of Communication Engineering and 
Informatics,
University of Electro-Communications, Tokyo, 182-8585, Japan, e-mail:oohama@uec.ac.jp.}

\thanks{Manuscript received ; revised }}

\markboth{Journal of \LaTeX\ Class Files,~Vol.~6, No.~1, January~2007}%
{Shell \MakeLowercase{\textit{et al.}}: Bare Demo of IEEEtran.cls for Journals}

\maketitle
\begin{abstract}
In coding schemes for the wire-tap channel or the broadcast channels with 
confidential messages, it is well known that the sender needs to use a stochastic 
encoding to avoid the information about the transmitted confidential message to be leaked 
to an
eavesdropper. In this paper, it is investigated that the trade-off between the rate of the 
random number to
realize the stochastic encoding and the rates of the common, private, and confidential
messages. For the direct theorem, the superposition coding
scheme for the wire-tap channel recently proposed by Chia and El Gamal
is employed, and its strong security is proved.
The matching converse theorem is also established. 
Our result clarifies that a combination of the ordinary stochastic encoding
and the channel prefixing by the channel simulation
is suboptimal.
\end{abstract}

\begin{IEEEkeywords}
Broadcast Channel, Confidential Messages,
Randomness Constraint, Stochastic Encoder, Superposition Coding, Wire-tap Channel
\end{IEEEkeywords}

\IEEEpeerreviewmaketitle

\section{Introduction}

The wire-tap channel is one sender and two receivers broadcast channel model
in which the sender, usually referred to as Alice, wants to transmit a confidential message to 
the legitimate 
receiver, usually referred to as Bob, in such a way that the other receiver, usually
referred to as eavesdropper Eve, cannot get any information about the
transmitted message. The wire-tap channel model 
was first introduced by Wyner in his seminal paper \cite{wyner:75}.
Later, Csisz\'ar and K\"orner
investigated the model called broadcast channels with confidential messages (BCC)
in which Alice also sends a common message that is supposed to be decoded by both Bob 
and Eve.
These models were further investigated by many researchers
from theoretical point of view (e.g..~see \cite{liang-book}), and recently it has attracted 
considerable attention from practical point of view as 
a physical layer security.

In coding schemes for the wire-tap channel or the BCC, it is well known that the sender 
needs to use a
stochastic encoder to avoid the information about the transmitted confidential message 
to be leaked to Eve. The stochastic encoding is usually realized by preparing
a dummy random number in addition to the intended messages and by encoding
them to a transmitted signal by a deterministic encoder. 
Furthermore, when the channel to Bob is not more capable than 
the channel to Eve, it is known that the sender needs to use 
the channel prefixing to achieve the capacity region (or the secrecy capacity)
because the capacity formulas involve such a channel from
an auxiliary random variable to the random variable describing the input signal of the 
channel
\cite{csiszar:78}. In literatures, it is assumed that there exists a channel 
realizing the channel prefixing. But in practice the prefixing channel must be
simulated from a random number by using a method such as 
the channel simulation \cite{steinberg:94}, which usually involves
certain amount of simulation error 
depending on the amount of the random number.
So far, there was no paper investigating how much random number
is needed to achieve the capacity region.
Since the random number is precious resource in practice,
though it has been paid no attention in literatures,
it is extremely important to investigate the amount of random number
needed to achieve the capacity region.
For this purpose, we formulate the problem of the BCC by
randomness constrained stochastic encoder, and completely
characterize the capacity region of this new problem.

The present problem to consider the randomness constrained stochastic encoder
is motivated by the authors' previous
results in \cite{oohama:10}. In that paper, the capacity
region of the relay channel with confidential messages 
for the completely deterministic encoder was investigated,
and the capacity region was characterized
for the BCC as a corollary. In this paper, we are interested in the 
case such that the randomness is constrained but not zero.
The result in \cite{oohama:10} can be regarded as an extreme case
of the present problem. On the other hand, the conventional BCC
problem can be regarded as the other extreme case, in which
the amount of randomness that can be used at the encoder is unbounded.

Typically in the BCC, Alice sends the common message that is 
supposed to be decoded by both Bob and Eve, and the confidential message
that is supposed to be decoded only by Bob. The level of secrecy of the 
confidential message is usually evaluated by the equivocation rate.
In this paper, we consider slightly different problem formulation, which has
been appeared in the literature \cite{csiszar-korner:11,oohama:10}.
In our problem setting, Alice sends three kinds of messages,
the common message, the private message, and the confidential messages.
The common message is supposed to be decoded by both Bob and Eve.
The private message is supposed to be decoded by Bob, and we do not 
care whether Eve can decode the private message or not.
The confidential message is supposed to be decoded by Bob, and it
must be kept completely secret from Eve.
Furthermore, for stochastic encoding, Alice is allowed to use
limited amount of dummy randomness. Thus, we are interested in
the trade-off between quadruple of rates, the rate of dummy randomness,
the rates of common, private, and confidential messages.
The coding system of our formulation is depicted in Fig.~\ref{Fig:system}.

\begin{figure}[th]
\centering
\includegraphics[width=\linewidth]{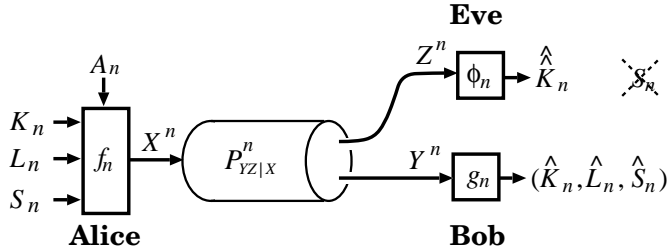}
\caption{The coding system investigated in this paper. Alice sends common
message $K_n$, private message $L_n$, and confidential message $S_n$
by using a deterministic function $f_n$ and a limited amount of dummy randomness $A_n$.
The common message is supposed to be decoded by both Bob and Eve.
The private message is supposed to be decoded by Bob, and we do not 
care whether Eve can decode the private message or not.
The confidential message is supposed to be decoded by Bob, and it
must be kept completely secret from Eve.
}
\label{Fig:system}
\end{figure}

The  reason we do not use the equivocation rate formulation is as follows.
In the conventional equivocation rate formulation, if the rate of dummy randomness
is not sufficient, a part of the confidential message is sacrificed to make the other
part completely secret and the rate of the completely secret part corresponds
to the equivocation rate. We  think that the rates of sacrificed part and completely secret part
become clearer by employing our formulation.

As we have mentioned above, the typical coding scheme for the wire-tap channel
or the BCC consists of the stochastic encoding 
and the channel prefixing. In \cite{chia:12}, Chia and
El Gamal proposed an alternative coding scheme
that utilizes the so-called superposition coding \cite{cover} instead
of the channel prefixing. In the direct part of our main result, we employ this superposition
scheme instead of the channel prefixing.
We also clarifies that a straightforward combination of the ordinary 
stochastic coding and the channel prefixing by the channel simulation method
is suboptimal.

Although Chia and El Gamal showed that the superposition coding scheme
can realize the so-called weak security criterion \cite{maurer:94b, csiszar:96},
it was not clear whether the superposition coding scheme can 
realize the so-called strong security criterion.
One of technical contributions of this paper is to show that
Chia and El Gamal's superposition coding scheme can
realize the strong security criterion.
 This is done by using the technique
proposed in \cite{hayashi:10}, and by
considering the channel resolvability
problem \cite{han:93} with the superposition coding.
Note that the relationship between the wire-tap channel coding and 
the channel resolvability was first pointed out 
by Csisz\'ar \cite{csiszar:96}, and is well recognized 
recently \cite{cai:04,devetak:05,hayashi:06b,bloch:11}.
The channel resolvability with the superposition coding was
first investigated by the second author in \cite{oohama:03}.
In that paper, the channel resolvability problem with the superposition
coding for the degraded
broadcast channel was considered to show the converse theorem
of the identification via degraded broadcast channels.
In this paper, the channel resolvability problem with 
the superposition coding for a single channel is considered.
Using the superposition coding for a single channel seems
nonsense at first glance, it does have a meaning when applied
to the wire-tap channel or the BCC.

After the submission of the first manuscript of this paper,
we noticed some related works investigating the importance of
random number in the BCC or the wire-tap channel.
In \cite{hayashi:12}, Hayashi and Matsumoto considered 
the secure multiplex coding \cite{kobayashi:05} in which
the messages are not necessarily uniform nor independent and
the entropy rate of the messages might be constrained. 
Although the secure multiplex coding can be regarded as a generalization
of the BCC, the encoder can use unlimited amount of dummy randomness
in addition to the messages in their problem formulation.
Thus, our results cannot be derived from their results.
In \cite{bloch:12},
Bloch and Kliewer  
considered the wire-tap channel in which the dummy 
randomness is constrained and not necessarily uniform.  
However, they only considered the case such that
the channel to Bob is more capable than that to Eve.
In such a case, the channel prefixing is not needed,
and their result corresponds to 
Corollary \ref{corollary-more-capable} in this paper
when the dummy randomness is uniform.

The rest of the paper is organized as follows.
In Section \ref{section:problem-main}, the problem
formulation is explained and main results are
presented. In Section \ref{section:resolvability},
the channel resolvability problem with the 
superposition coding is discussed. In
Section \ref{proof-of-main-result}, the proof of the main theorem
is presented.
In Section \ref{section:conclusion}, the paper
is concluded with discussions.
Some technical arguments are presented in Appendices.


\section{Problem Formulation and Main Results}
\label{section:problem-main}

\subsection{Problem Formulation}

Let $P_{Y|X}$ and $P_{Z|X}$ be two channels with common
input alphabet ${\cal X}$ and output alphabets ${\cal Y}$ and
${\cal Z}$ respectively. Throughout the paper, the alphabets are assumed to
be finite though we do not use finiteness of the alphabet except
cardinality bonds on auxiliary random variables. 
We also assume that the base of logarithm is $e$ throughout the paper.

Let ${\cal K}_n$ be the set of the common message, ${\cal L}_n$ be the set of
the private message, and ${\cal S}_n$ be the set of the confidential message.
The common message is supposed to be decoded by both Bob and Eve.
The private message is supposed to be decoded by Bob, and we do not care
whether Eve can decode the private message or not. The confidential message
is supposed to be decoded by Bob, and it must be kept completely secret from Eve.

Typically, Alice use a stochastic encoder to make the confidential message secret
from Eve, and it is practically realized by using a uniform dummy randomness on
the alphabet ${\cal A}_n$. When the size $|{\cal A}_n|$ of dummy randomness is
infinite, any stochastic encoder from ${\cal K}_n \times {\cal L}_n \times {\cal S}_n$ to 
${\cal X}^n$
can be simulated by a deterministic encoder 
$f_n:{\cal K}_n \times {\cal L}_n \times {\cal S}_n \times {\cal A}_n \to {\cal X}^n$.
But we are interested in the case with bounded size  $|{\cal A}_n|$ in this paper.

Bob's decoder is defined by function $g_n: {\cal Y}^n \to {\cal K}_n \times {\cal L}_n \times 
{\cal S}_n$
and the error probability is defined as 
\begin{eqnarray}
\lefteqn{ P_{err}(f_n,g_n) } \nonumber \\ 
	&=& \sum_{k_n \in {\cal K}_n} \sum_{\ell_n \in {\cal L}_n} \sum_{s_n \in {\cal S}_n}
	\sum_{a_n \in {\cal A}_n} \frac{1}{|{\cal K}_n||{\cal L}_n||{\cal S}_n||{\cal A}_n|} 
\nonumber \\
	&& P_{Y|X}^n(y^n|f_n(k_n,\ell_n,s_n,a_n)) \bol{1}[g_n(y^n) \neq (k_n,\ell_n,s_n)], 
\nonumber \\
	\label{eq:bob-decoding-error-probability}
\end{eqnarray}
where $\bol{1}[\cdot]$ is the indicator function. 
Eve's decoder is defined by function $\phi_n:{\cal Z}^n \to {\cal K}_n$ and
the error probability $P_{err}(f_n,\phi_n)$ 
is defined in a similar manner as Eq.~(\ref{eq:bob-decoding-error-probability}).

Let 
\begin{eqnarray*}
P_{\tilde{Z}^n|S_n}(z^n|s_n)
  &=& \sum_{k_n \in {\cal K}_n} \sum_{\ell_n \in {\cal L}_n} 
  	\sum_{a_n \in {\cal A}_n} \frac{1}{|{\cal K}_n||{\cal L}_n||{\cal A}_n|} \\
 &&~	P_{Z|X}^n(z^n|f_n(k_n,\ell_n,s_n,a_n)), \\
P_{\tilde{Z}^n}(z^n) 
  &=& \sum_{s_n \in {\cal S}_n} 
\frac{1}{|{\cal S}_n|}  
P_{\tilde{Z}^n|S_n}(z^n|s_n)
\end{eqnarray*}
be the output distributions of the channel $P_{Z|X}^n$.
In this paper, we consider the security criterion given by
\begin{eqnarray*}
D(f_n) 
&:=& D(P_{S_n \tilde{Z}^n} \| P_{S_n} \times P_{\tilde{Z}^n}) \\
&=& \sum_{s_n \in {\cal S}_n} \frac{1}{|{\cal S}_n|} D(P_{\tilde{Z}^n|S_n}(\cdot|s_n) \| 
P_{\tilde{Z}^n}) \\
&=& I(S_n; \tilde{Z}^n),
\end{eqnarray*}
where $D(\cdot \| \cdot)$ is the divergence, and $I(\cdot; \cdot)$ is the mutual information
\cite{cover}. The coding system investigated in this paper is depicted in Fig.~\ref{Fig:system}.

In this paper, we are interested in the trade-off among
the rate the dummy randomness, and
the rates of the common, private, and confidential messages.
\begin{definition}
The rate quadruple $(R_d,R_0,R_1, R_s)$ is said to be {\em achievable} if
there exists a sequence of Alice's deterministic encoder 
$f_n:{\cal K}_n \times {\cal L}_n \times {\cal S}_n \times {\cal A}_n \to {\cal X}^n$, 
Bob's decoder $g_n:{\cal Y}^n \to {\cal K}_n \times {\cal L}_n \times {\cal S}_n$,
and Eve's decoder $\phi_n:{\cal Z}^n \to {\cal K}_n$ such that
\begin{eqnarray}
\lim_{n \to \infty} P_{err}(f_n,g_n) &=& 0, \\
\lim_{n \to \infty} P_{err}(f_n,\phi_n) &=& 0, \\
\lim_{ n \to \infty} D(f_n) &=& 0, \label{eq:definition-of-achievability-D} \\
\limsup_{n \to \infty} \frac{1}{n} \log |{\cal A}_n| &\le& R_d, \\
\liminf_{n \to \infty} \frac{1}{n} \log |{\cal K}_n| &\ge& R_0, \\
\lim_{n \to \infty} \frac{1}{n} \log |{\cal L}_n| &=& R_1, 
	\label{eq:rate-R1} \\
\liminf_{n \to \infty} \frac{1}{n} \log |{\cal S}_n| &\ge& R_s.
\end{eqnarray}
Then the achievable region ${\cal R}$ is defined 
as the set of all achievable rate quadruples.
\end{definition}

\subsection{Statements of General Results}

The following is our main result in this paper.
\begin{theorem}
\label{theorem:main}
Let ${\cal R}^*$ be a closed convex set consisting of
those quadruples $(R_d,R_0,R_1,R_s)$ for which
there exist auxiliary random variables $(U,V)$ such that
$U \markov V \markov X \markov (Y,Z)$ and
\begin{eqnarray}
R_0 &\le& \min[ I(U;Y), I(U;Z) ], \label{theorem:main-condition1} \\
R_0 + R_1 + R_s &\le& I(V;Y|U) + \min[ I(U;Y), I(U;Z)],  \nonumber \\
 \label{theorem:main-condition2} \\
R_s &\le& I(V;Y|U) - I(V;Z|U), \label{theorem:main-condition4} \\
R_1 + R_d &\ge& I(X;Z|U), \label{theorem:main-condition3} \\
R_d &\ge& I(X; Z|V). \label{theorem:main-condition5}
\end{eqnarray}
Then we have
${\cal R} = {\cal R}^*$.
Moreover, it may be assumed that $V = (U, V^\prime)$ and
that the ranges of $U$ and $V^\prime$ may be assumed to
satisfy $|{\cal U}| \le |{\cal X}| + 3$ and $|{\cal V}^\prime| \le |{\cal X}| + 1$.
\end{theorem}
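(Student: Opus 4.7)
The plan is to prove $\mathcal{R}^*\subseteq\mathcal{R}$ (direct part), $\mathcal{R}\subseteq\mathcal{R}^*$ (converse), and the cardinality bounds, building on the channel resolvability results of Section \ref{section:resolvability} for strong security and adapting the Csisz\'ar--K\"orner BCC machinery to handle the new randomness constraints.

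\textbf{Direct part.} Fix a joint distribution $P_UP_{V|U}P_{X|V}P_{YZ|X}$ realizing a point of $\mathcal{R}^*$. I would construct a two-layer superposition random code in the spirit of Chia--El Gamal \cite{chia:12}: generate $e^{nR_0}$ cloud centres $u^n(k)$ i.i.d.\ $P_U$, and for each $k$ generate $e^{n(R_1+R_s+R_d)}$ satellite codewords $v^n(k,\ell,s,a)$ i.i.d.\ $P_{V|U}^n(\cdot|u^n(k))$, and pick $x^n(k,\ell,s,a)$ jointly typical with $v^n(k,\ell,s,a)$ under $P_{X|V}$. The deterministic encoder is $f_n(k,\ell,s,a)=x^n(k,\ell,s,a)$. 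Bob performs joint-typicality decoding for $(k,\ell,s)$ while Eve decodes only $k$; the packing lemma applied to the two layers delivers the rate constraints (\ref{theorem:main-condition1}) and (\ref{theorem:main-condition2}).

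\textbf{Strong security.} I would then establish $D(f_n)\to 0$ in (\ref{eq:definition-of-achievability-D}) through a two-level channel resolvability argument based on Section \ref{section:resolvability}. At the V-layer, condition (\ref{theorem:main-condition5}) gives $R_d\ge I(X;Z|V)$, which ensures that for each fixed $(k,\ell,s)$ the average over the dummy $a$ drives the conditional law of $Z^n$ given $V^n$ close in variational distance to $P_{Z|V}^n$. At the U-layer, condition (\ref{theorem:main-condition3}) gives $R_1+R_d\ge I(X;Z|U)$, which ensures that for each $(k,s)$ the average over $(\ell,a)$---treating the private message $\ell$ as additional randomness since it is not a secret---drives the conditional law of $Z^n$ given $U^n$ close to $P_{Z|U}^n$. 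Chaining these two approximations via the triangle inequality and converting total variation into mutual information by Pinsker's inequality yields $I(S_n;\tilde Z^n)\to 0$.

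\textbf{Converse part.} Assume $(R_d,R_0,R_1,R_s)$ is achievable. Fano's inequality gives $H(K,L,S|Y^n)\le n\epsilon_n$ and $H(K|Z^n)\le n\epsilon_n$ with $\epsilon_n\to 0$. I take the standard Csisz\'ar--K\"orner identifications $U_i=(K,Z^{i-1},Y_{i+1}^n)$ and $V_i=(U_i,L)$ (augmenting $V'_i$ with whatever extra coordinates are needed to single-letterize the secrecy bound), together with a time-sharing index $J$ uniform on $\{1,\ldots,n\}$, and set $U=(U_J,J)$, $V=(V_J,J)$, $X=X_J$, $Y=Y_J$, $Z=Z_J$; the Markov chain $U\markov V\markov X\markov(Y,Z)$ is inherited from the memorylessness of the channels. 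Conditions (\ref{theorem:main-condition1}), (\ref{theorem:main-condition2}), and (\ref{theorem:main-condition4}) follow by standard manipulations of the Csisz\'ar sum identity combined with Fano's inequality, in parallel with the BCC converse of \cite{csiszar:78}. The new ingredients are (\ref{theorem:main-condition3}) and (\ref{theorem:main-condition5}), which are driven by the fact that $X^n=f_n(K,L,S,A)$ is deterministic and $A$ is uniform and independent of the messages, so that
\begin{align*}
nR_d &\ge H(A)\ge H(X^n|K,L,S)\ge I(X^n;Z^n|K,L,S),\\
n(R_1+R_d) &\ge H(L,A)\ge H(X^n|K,S)\ge I(X^n;Z^n|K,S).
\end{align*}
Expanding by the chain rule and averaging through $J$ single-letterizes these inequalities to $R_d\ge I(X;Z|V)$ and $R_1+R_d\ge I(X;Z|U)$ respectively, under a compatible identification of the auxiliaries. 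The cardinality bounds $|\mathcal{U}|\le|\mathcal{X}|+3$ and $|\mathcal{V}'|\le|\mathcal{X}|+1$ follow from the Fenchel--Eggleston--Carath\'eodory theorem applied successively to preserve $P_X$ together with the functionals appearing on the right-hand sides of (\ref{theorem:main-condition1})--(\ref{theorem:main-condition5}).

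\textbf{Main obstacle.} The principal hurdle is the strong security analysis: the Chia--El Gamal superposition scheme was originally shown only under the weak secrecy criterion, and upgrading to $I(S_n;\tilde Z^n)\to 0$ requires the carefully coordinated two-layer resolvability analysis of Section \ref{section:resolvability}, without which the two constraints (\ref{theorem:main-condition3}) and (\ref{theorem:main-condition5}) would not tightly match. A secondary difficulty is on the converse side, where one must thread the needle on the choice of auxiliaries so that the entropies $H(A)$ and $H(L,A)$ collapse to $I(X;Z|V)$ and $I(X;Z|U)$ with precisely the right conditioning---a step absent from the classical BCC converse, where the randomness rate is implicitly infinite.
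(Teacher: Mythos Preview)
Your overall plan follows the paper's strategy, but there are concrete gaps in both halves.

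\textbf{Direct part.} Your code construction is off: you index $v^n$ by $(k,\ell,s,a)$, whereas the paper uses a genuine three-layer superposition in which $v^n$ is indexed by $(k,\ell,s)$ only and then $x^n(k,\ell,s,a)$ is drawn from $P_{X|V}^n(\cdot\mid v^n(k,\ell,s))$. This matters because your own security argument needs, for fixed $(k,\ell,s)$, that averaging over $a$ approximates $P_{Z|V}^n(\cdot\mid v^n)$ for a \emph{fixed} $v^n$; this fails if $v^n$ itself varies with $a$. With the correct three-layer code, the resolvability analysis (Lemma~\ref{lemma:error-analysis}, Eq.~(\ref{eq:error-analysis-3})) yields the separate constraints $R_d \ge I(X;Z|V)$ and $R_1 \ge I(V;Z|U)$, not $R_1+R_d \ge I(X;Z|U)$ directly; the paper therefore first proves the simpler inner region ${\cal R}^{(in)}$ of Lemma~\ref{lemma:inner-bound} and only then recovers ${\cal R}^*$ via rate splitting (Lemma~\ref{lemma:rate-spliting}), a step you omit. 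Finally, Pinsker's inequality runs the wrong way for your purpose (it bounds total variation by divergence, not vice versa); the paper avoids this by bounding $D(f_n)$ directly in KL via Lemma~\ref{theorem:resolvability}, never passing through total variation.

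\textbf{Converse.} Your choice $V_i=(U_i,L)$ omits $S$; the paper takes $V_t=(U_t,L_n,S_n)$, which is what allows $I(L_n,S_n;Y^n|K_n)-I(L_n,S_n;Z^n|K_n)$ to single-letterize to $I(V;Y|U)-I(V;Z|U)$ for condition (\ref{theorem:main-condition4}). More critically, your bound $n(R_1+R_d)\ge I(X^n;Z^n|K,S)$ does not single-letterize to $nI(X;Z|U)$, because $U_t$ carries $K_n$ but not $S_n$. The paper closes this gap using the secrecy condition: since $I(S_n;Z^n|K_n)\le 2n\varepsilon_n$ (Eq.~(\ref{eq:szk-bound})), one gets $I(X^n;Z^n|K_n,S_n)\ge I(X^n;Z^n|K_n)-2n\varepsilon_n$, and it is $I(X^n;Z^n|K_n)$ that single-letterizes to $nI(X;Z|U)$ in Lemma~\ref{lemma:converse-single-letter}.
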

\begin{proof}
See Section \ref{proof-of-main-result}.
\end{proof}

The conditions on $R_0$ and $R_1 + R_s$ in
Eqs.~(\ref{theorem:main-condition1}) and (\ref{theorem:main-condition2}) resemble
the conditions in the broadcast channel with degraded message sets \cite{korner:77}.
The condition on $R_s$ in Eq.~(\ref{theorem:main-condition4}) exists
because there is a security requirement on the confidential message.
These conditions are exactly the same as those in the conventional BCC
(see Corollary \ref{corollary:bcc}).
The conditions on $R_1$ and $R_d$ in Eqs.~(\ref{theorem:main-condition3}) and 
(\ref{theorem:main-condition5}) 
additionally appear in Theorem \ref{theorem:main} because there are randomness 
constraints in our problem setting.

\begin{remark}
Conventionally, the security requirement defined by
\begin{eqnarray}
\label{eq:weak-security}
\lim_{n \to \infty} \frac{1}{n} D(f_n) = 0
\end{eqnarray}
is usually employed instead of Eq.~(\ref{eq:definition-of-achievability-D}).
Eq.~(\ref{eq:weak-security}) is called weak security criterion
and Eq.~(\ref{eq:definition-of-achievability-D}) is called strong security criterion 
\cite{csiszar:96,maurer:94b}. 
Let $\tilde{{\cal R}}$ be the achievable region in which 
Eq.~(\ref{eq:definition-of-achievability-D})
is replaced by Eq.~(\ref{eq:weak-security}). From the definitions of two regions, ${\cal R} 
\subset \tilde{{\cal R}}$
obviously holds. Actually, we are implicitly showing $\tilde{{\cal R}} \subset {\cal R}^*$ in 
the converse proof
of Theorem \ref{theorem:main}. Thus, ${\cal R} = \tilde{{\cal R}}$. 
\end{remark}

\begin{remark}
As we will find in the achievability proof of the main theorem,
the private message can be used as dummy randomness to
protect the confidential message from Eve. Thus, if we 
define the achievable rate region $\hat{{\cal R}}$ by
replacing Eq.~(\ref{eq:rate-R1}) with 
\begin{eqnarray}
\liminf_{n \to \infty} \frac{1}{n} \log |{\cal L}_n| \ge R_1,
\end{eqnarray}
region $\hat{{\cal R}}$ is broader than region ${\cal R}$.
Indeed, $\hat{{\cal R}}$ is a closed convex set consisting of those
quadruples $(R_d,R_0,R_1,R_s)$ for which there exist auxiliary
random variables $(U,V)$ satisfying the same conditions as
Theorem \ref{theorem:main} except Eq.~(\ref{theorem:main-condition3})\footnote{It
can be proved by just omitting the derivation of Eq.~(\ref{theorem:main-condition3})
in the converse proof of Theorem \ref{theorem:main}.}.
\end{remark}

\begin{remark}
Eq.~(\ref{theorem:main-condition5}) means that there is a certain amount
of dummy randomness that cannot be substituted by the private message.
Note that the difference between the private message and the dummy randomness 
is whether Bob needs to decode it or not.
\end{remark}

Let 
\begin{eqnarray*}
{\cal R}_{\infty} = \{ (R_0,R_1, R_s): \exists R_d \ge 0 ~\mbox{s.t.}~(R_d,R_0,R_1, R_s) \in 
{\cal R} \}
\end{eqnarray*}
be the set of all achievable triplet $(R_0,R_1,R_s)$ by arbitrary stochastic encoder.
By taking sufficiently large $R_d$, we recover the following well known 
result \cite{csiszar:78}\footnote{See also \cite[Theorem 17.13]{csiszar-korner:11}
for the result that does not employ the rate-equivocation formulation.}.
\begin{corollary}
\label{corollary:bcc}
(\cite{csiszar:78})
Region ${\cal R}_\infty$ is a closed convex set consisting of those
triplet $(R_0,R_1,R_s)$ for which there exist auxiliary random variables
$(U,V)$ such that $U \markov V \markov X \markov (Y,Z)$ and
\begin{eqnarray*}
R_0 &\le& \min[I(U;Y), I(U;Z)], \\
R_0 + R_1 + R_s &\le& I(V;Y|U) + \min[ I(U;Y), I(U;Z)], \\
R_s &\le& I(V;Y|U) - I(V;Z|U).
\end{eqnarray*}
\end{corollary}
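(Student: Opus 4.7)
The plan is to obtain Corollary \ref{corollary:bcc} as an immediate projection of Theorem \ref{theorem:main} onto the $(R_0,R_1,R_s)$ coordinates, by eliminating the dummy-randomness rate $R_d$. Since ${\cal R}_\infty$ is defined as the set of triples $(R_0,R_1,R_s)$ for which some $R_d\ge 0$ makes $(R_d,R_0,R_1,R_s)\in{\cal R}$, and since Theorem \ref{theorem:main} already asserts ${\cal R}={\cal R}^*$, it suffices to describe the projection of ${\cal R}^*$ onto its last three coordinates.

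First I would handle the \emph{outer bound}. Take any $(R_0,R_1,R_s)\in{\cal R}_\infty$. By definition there is some $R_d\ge 0$ with $(R_d,R_0,R_1,R_s)\in{\cal R}={\cal R}^*$, so there exist auxiliary variables $(U,V)$ satisfying $U\markov V\markov X\markov(Y,Z)$ together with the five inequalities of Theorem \ref{theorem:main}. Discarding the two constraints (\ref{theorem:main-condition3}) and (\ref{theorem:main-condition5}) that involve $R_d$ leaves exactly the three conditions claimed in the corollary. Thus every element of ${\cal R}_\infty$ lies in the asserted region, and closedness/convexity is inherited from that of ${\cal R}^*$.

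For the \emph{inner bound}, suppose $(R_0,R_1,R_s)$ satisfies the three conditions with some $(U,V)$ with $U\markov V\markov X\markov(Y,Z)$. The idea is simply to choose $R_d$ so large that the two randomness constraints become vacuous. Concretely, set
\begin{equation*}
R_d := \max\bigl[\,I(X;Z|V),\; I(X;Z|U)-R_1,\; 0\,\bigr].
\end{equation*}
With this choice, inequality (\ref{theorem:main-condition5}) holds by construction, and inequality (\ref{theorem:main-condition3}) holds because $R_1+R_d\ge R_1+(I(X;Z|U)-R_1)=I(X;Z|U)$. The three conditions already assumed give the remaining inequalities of Theorem \ref{theorem:main} for the same $(U,V)$. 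Hence $(R_d,R_0,R_1,R_s)\in{\cal R}^*={\cal R}$, which by definition of ${\cal R}_\infty$ places $(R_0,R_1,R_s)$ in ${\cal R}_\infty$.

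There is no serious technical obstacle here: the argument is purely a projection/Fourier--Motzkin elimination of $R_d$ from Theorem \ref{theorem:main}, made trivial by the fact that $R_d$ appears only on the right-hand side of the lower-bound constraints (\ref{theorem:main-condition3}) and (\ref{theorem:main-condition5}), so those constraints can always be met by enlarging $R_d$. The only point worth mentioning explicitly is that the constants $I(X;Z|V)$ and $I(X;Z|U)-R_1$ are finite under the finite-alphabet assumption, so a finite $R_d$ always suffices and the projection is indeed described by the three stated inequalities.
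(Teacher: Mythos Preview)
Your proposal is correct and matches the paper's own argument, which is simply the one-line remark preceding the corollary that the result follows from Theorem~\ref{theorem:main} ``by taking sufficiently large $R_d$.'' You have merely spelled out both inclusions explicitly, which is fine; the paper treats the corollary as an immediate consequence and gives no further proof.
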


Let 
\begin{eqnarray*}
{\cal R}_{det} = \{ (R_0,R_1,R_s) :~(0,R_0,R_1,R_s) \in {\cal R} \}
\end{eqnarray*}
be the set of all rate triplets that can be achieved by deterministic 
encoder. This extreme case was solved in \cite{oohama:10},
which can be also derived as a corollary of Theorem \ref{theorem:main}\footnote{In 
\cite{oohama:10},
slightly deferent problem formulation is employed and the achievable region seems slightly 
different from Corollary \ref{corollary:deterministic}. But they are essentially the same.}.
\begin{corollary}
\label{corollary:deterministic}
(\cite{oohama:10})
Let ${\cal R}_{det}^*$ be a closed convex set consisting of
those triplet $(R_0,R_1,R_s)$ for which
there exists an auxiliary random variable $U$ such that
$U \markov X \markov (Y,Z)$ and 
\begin{eqnarray*}
R_0 &\le& \min[ I(U;Y), I(U;Z)], \\
R_0 + R_1 + R_s &\le& I(X;Y|U) + \min[I(U;Y), I(U;Z)], \\
R_s &\le& I(X;Y|U) - I(X;Z|U), \\
R_1 &\ge& I(X;Z|U). \\
\end{eqnarray*}
Then we have ${\cal R}_{det} = {\cal R}_{det}^*$.
\end{corollary}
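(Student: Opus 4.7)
My plan is to derive Corollary \ref{corollary:deterministic} as a direct specialization of Theorem \ref{theorem:main} to $R_d = 0$. The key observation is that this constraint forces $I(X;Z|V) = 0$ through condition (\ref{theorem:main-condition5}), which, combined with the Markov chain $U \markov V \markov X$, collapses the distinction between conditioning on $V$ and conditioning on $X$ in the relevant information quantities. I would prove the two inclusions ${\cal R}_{det}^* \subseteq {\cal R}_{det}$ and ${\cal R}_{det} \subseteq {\cal R}_{det}^*$ separately.

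For the achievability direction, given $(R_0, R_1, R_s) \in {\cal R}_{det}^*$ with auxiliary $U$, I would invoke Theorem \ref{theorem:main} with the choice $V := X$ and $R_d := 0$. The Markov chain $U \markov V \markov X \markov (Y, Z)$ then degenerates to the hypothesized $U \markov X \markov (Y, Z)$. Direct substitution shows that (\ref{theorem:main-condition1})--(\ref{theorem:main-condition4}) become the four defining inequalities of ${\cal R}_{det}^*$, while (\ref{theorem:main-condition5}) becomes $0 \ge I(X;Z|X) = 0$, which is trivially satisfied. Hence $(0, R_0, R_1, R_s) \in {\cal R}^* = {\cal R}$, so $(R_0, R_1, R_s) \in {\cal R}_{det}$.

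For the converse, given $(R_0, R_1, R_s) \in {\cal R}_{det}$, Theorem \ref{theorem:main} supplies auxiliaries $(U, V)$ satisfying the five conditions with $R_d = 0$. Condition (\ref{theorem:main-condition5}) yields $I(X;Z|V) = 0$, and the Markov structure $U \markov V \markov X \markov Z$ (which implies $P(Z|V, U) = P(Z|V)$ and $P(Z|X, V, U) = P(Z|X)$) then gives $I(X;Z|V, U) = 0$ as well. Keeping the same $U$, I would upgrade each $V$-based bound to an $X$-based one via the chain rule:
\begin{eqnarray*}
I(X;Y|U) &=& I(V;Y|U) + I(X;Y|V, U) \ge I(V;Y|U), \\
I(X;Z|U) &=& I(V;Z|U) + I(X;Z|V, U) = I(V;Z|U),
\end{eqnarray*}
whence $I(X;Y|U) - I(X;Z|U) \ge I(V;Y|U) - I(V;Z|U)$. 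These three estimates translate (\ref{theorem:main-condition2}) and (\ref{theorem:main-condition4}) into their $X$-based analogues, while $R_1 \ge I(X;Z|U)$ is identical to (\ref{theorem:main-condition3}) with $R_d = 0$. Together with (\ref{theorem:main-condition1}), this places $(R_0, R_1, R_s)$ in ${\cal R}_{det}^*$.

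I do not expect a substantive technical obstacle: the proof is a mechanical specialization in which $R_d = 0$ pinches $V$ and $X$ together informationally. The only care required is in the Markov-chain bookkeeping to confirm $I(X;Z|V, U) = 0$ from $I(X;Z|V) = 0$; the footnote in the statement further suggests that reconciling notation with \cite{oohama:10} may warrant a brief verification that the two formulations describe the same region.
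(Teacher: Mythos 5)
Your proposal is correct and follows essentially the same route as the paper's proof: take $V = X$ for the forward inclusion, and for the converse use $R_d = 0$ together with Eq.~(\ref{theorem:main-condition5}) to conclude $I(X;Z|V) = 0$, whence $I(X;Z|U) = I(V;Z|U)$ and $I(V;Y|U) \le I(X;Y|U)$ under the Markov chain. The paper writes the converse identity more tersely as $I(X;Z|U) = I(V;Z|U) + I(X;Z|V)$, silently using that the Markov chain $U \markov V \markov X \markov Z$ gives $I(X;Z|V,U) = I(X;Z|V)$; your version makes the intermediate step $I(X;Z|V,U) = 0$ explicit, which is a harmless and slightly more careful bookkeeping of the same argument.
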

\begin{proof}
The inclusion ${\cal R}_{det}^* \subset {\cal R}_{det}$ is obvious by
taking $V = X$ in Theorem \ref{theorem:main}. For the opposite inclusion,
note that Eq.~(\ref{theorem:main-condition5}) and $R_d = 0$ imply
\begin{eqnarray*}
I(X;Z|U) = I(V;Z|U) + I(X;Z|V) = I(V;Z|U).
\end{eqnarray*}
We also have $I(V;Y|U) \le I(X;Y|U)$ from the Markov condition
of the auxiliary random variables. Thus, we have
\begin{eqnarray*}
{\cal R}_{det} \subset \{(R_0,R_1,R_s): (0,R_0,R_1,R_s) \in {\cal R}^* \} \subset {\cal 
R}_{det}^*.
\end{eqnarray*}
\end{proof}

Let 
\begin{eqnarray*}
R_d(R_0,R_s) = \inf\{ R_d: (R_d,R_0,0,R_s) \in {\cal R} \}
\end{eqnarray*}
be the infimum rate of dummy randomness needed to achieve the rates $(R_0,R_s)$.
From Theorem \ref{theorem:main}, we can characterize not only the known 
extreme cases (Corollary \ref{corollary:bcc} and Corollary \ref{corollary:deterministic}) 
but also this quantity.
\begin{corollary}
$R_d(R_0,R_s)$ is the optimal solution of the following optimization problem:
\begin{eqnarray*}
\begin{array}{rcl}
\mbox{minimize} & & I(X;Z|U) \\
\mbox{subject to} & & \\
 R_0 &\le& \min[I(U;Y),I(U;Z)], \\
 R_0 + R_s &\le& I(V;Y|U) + \min[I(U;Y),I(U;Z)], \\
 R_s &\le& I(V;Y|U) - I(V;Z|U), 
\end{array}
\end{eqnarray*}
where $(U,V)$ satisfy $U \markov V \markov X \markov (Y,Z)$.
\end{corollary}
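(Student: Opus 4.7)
The plan is to read off the characterization directly from Theorem~\ref{theorem:main} by specializing to $R_1 = 0$ and then identifying the binding constraint on $R_d$. First I would note that, by definition, $R_d(R_0, R_s)$ is the infimum of $R_d$ for which the quadruple $(R_d, R_0, 0, R_s)$ belongs to ${\cal R} = {\cal R}^*$. Setting $R_1 = 0$ in Theorem~\ref{theorem:main}, this membership is equivalent to the existence of auxiliary random variables $(U,V)$ with $U \markov V \markov X \markov (Y,Z)$ satisfying Eqs.~(\ref{theorem:main-condition1}), (\ref{theorem:main-condition2}), (\ref{theorem:main-condition4}) on the message rates, together with the two randomness constraints $R_d \ge I(X;Z|U)$ from Eq.~(\ref{theorem:main-condition3}) and $R_d \ge I(X;Z|V)$ from Eq.~(\ref{theorem:main-condition5}).

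The next step is to show that the second randomness constraint is redundant. The Markov chain $U \markov V \markov X \markov (Y,Z)$ implies both $U \markov V \markov Z$ and $U \markov (V, X) \markov Z$, so $I(U;Z|V) = 0$ and $I(U;Z|V,X) = 0$. Expanding $I(U,X;Z|V)$ in two ways with the chain rule then yields
\begin{equation*}
I(X;Z|V,U) = I(X;Z|V),
\end{equation*}
and a further application of the chain rule gives
\begin{equation*}
I(X;Z|U) = I(V;Z|U) + I(X;Z|V,U) = I(V;Z|U) + I(X;Z|V) \ge I(X;Z|V).
\end{equation*}
Hence Eq.~(\ref{theorem:main-condition5}) is automatically implied by Eq.~(\ref{theorem:main-condition3}) once $R_1 = 0$, and can be dropped.

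With only $R_d \ge I(X;Z|U)$ remaining as a constraint on the randomness rate, taking the infimum over $R_d$ gives $R_d(R_0, R_s) = \inf I(X;Z|U)$, where the infimum is over all $(U,V)$ satisfying the Markov chain and the three message-rate inequalities, which is precisely the optimization problem in the statement. I do not anticipate a substantive obstacle here; the only delicate point is the redundancy of Eq.~(\ref{theorem:main-condition5}), and even that reduces to the short Markov-chain chain-rule manipulation above.
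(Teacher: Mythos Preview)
Your argument is correct and is exactly the derivation the paper implicitly relies on: the corollary is stated without proof as a direct specialization of Theorem~\ref{theorem:main} to $R_1=0$, and the only nontrivial step---that Eq.~(\ref{theorem:main-condition5}) becomes redundant because $I(X;Z|U)=I(V;Z|U)+I(X;Z|V)\ge I(X;Z|V)$ under the Markov chain---is precisely the identity the paper itself uses elsewhere (e.g., in the proof of Corollary~\ref{corollary:deterministic} and in the discussion following Proposition after Corollary~\ref{corollary-ds}).
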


Let 
\begin{eqnarray*}
{\cal R}_{ds} = \{ (R_d,R_s):~(R_d,0,0,R_s) \in {\cal R} \}.
\end{eqnarray*}
As a corollary of Theorem \ref{theorem:main}, we also have the following.
\begin{corollary}
\label{corollary-ds}
Let ${\cal R}_{ds}^*$ be a closed convex set consisting of
those rate pair $(R_d,R_s)$ for which
there exist auxiliary random variables $(U,V)$ such that
$U \markov V \markov X \markov (Y,Z)$ and 
\begin{eqnarray}
R_s &\le& I(V;Y|U) - I(V;Z|U), \\
R_d &\ge& I(X;Z|U).
\end{eqnarray}
Then we have ${\cal R}_{ds} = {\cal R}_{ds}^*$.
\end{corollary}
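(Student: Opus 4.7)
The corollary is a specialization of Theorem \ref{theorem:main} to $R_0 = R_1 = 0$, so my plan is to verify that under this substitution the five conditions of the theorem collapse to the two defining conditions of ${\cal R}_{ds}^*$.

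For the direct inclusion ${\cal R}_{ds}^* \subset {\cal R}_{ds}$, I would take $(R_d, R_s) \in {\cal R}_{ds}^*$ together with witnessing auxiliaries $(U, V)$ and check each of (\ref{theorem:main-condition1})--(\ref{theorem:main-condition5}). Conditions (\ref{theorem:main-condition1}) and (\ref{theorem:main-condition2}) are immediate because their right-hand sides are non-negative and $R_s \le I(V;Y|U) - I(V;Z|U) \le I(V;Y|U)$. Condition (\ref{theorem:main-condition4}) is exactly the assumed secrecy inequality, and condition (\ref{theorem:main-condition3}) with $R_1 = 0$ is precisely $R_d \ge I(X;Z|U)$.

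The only step requiring a brief calculation is (\ref{theorem:main-condition5}): by the Markov chain $U \markov V \markov X$ one has $I(X;Z|U) = I(V;Z|U) + I(X;Z|V)$, and therefore $I(X;Z|V) \le I(X;Z|U) \le R_d$. For the converse ${\cal R}_{ds} \subset {\cal R}_{ds}^*$, I would take $(R_d, R_s) \in {\cal R}_{ds}$ and invoke Theorem \ref{theorem:main} to produce auxiliaries $(U, V)$ satisfying all five conditions with $R_0 = R_1 = 0$; then (\ref{theorem:main-condition4}) and (\ref{theorem:main-condition3}) yield exactly the two inequalities defining ${\cal R}_{ds}^*$.

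Since both inclusions reduce to direct substitution into Theorem \ref{theorem:main}, there is no genuine obstacle here. The only observation worth making explicit is that (\ref{theorem:main-condition5}) is automatically implied by (\ref{theorem:main-condition3}) under the Markov chain $U \markov V \markov X$, which is why the characterization of ${\cal R}_{ds}$ needs only the single randomness inequality $R_d \ge I(X;Z|U)$.
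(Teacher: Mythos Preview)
Your argument is correct and matches the paper's intended derivation: the corollary is stated there without proof, as an immediate consequence of Theorem~\ref{theorem:main} under the specialization $R_0=R_1=0$. Your only nontrivial check, that condition~(\ref{theorem:main-condition5}) is implied by condition~(\ref{theorem:main-condition3}) via the chain-rule identity $I(X;Z|U)=I(V;Z|U)+I(X;Z|V)$, is exactly the observation needed and is used elsewhere in the paper (e.g., in the proof of Corollary~\ref{corollary:deterministic}).
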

\begin{remark}
\label{remark:size-time-sharing}
The auxiliary random variable $U$ in Corollary \ref{corollary-ds}
only plays a role of time-sharing. Thus, the range of $U$ may
be assumed to satisfy $|{\cal U}| \le 2$.
The same remark is also applied for Corollary \ref{corollary-more-capable}.
\end{remark}

Let 
\begin{eqnarray*}
C_s = \sup\{ R_s:~ \exists R_d \ge 0 \mbox{ s.t. } (R_d,R_s) \in {\cal R}_{ds} \} 
\end{eqnarray*}
be the secrecy capacity, which can be characterized by
the supremum of the rate $R_s$ for which there
exists auxiliary random variable $V$ such that
$V \markov X \markov (Y,Z)$ and
\begin{eqnarray}
\label{eq:secrecy-capacity}
R_s \le I(V; Y) - I(V;Z).
\end{eqnarray}
To achieve the rate given by the right hand side of
Eq.~(\ref{eq:secrecy-capacity}), we conventionally used
the following coding scheme. First, we construct a 
wire-tap channel code for channel pairs $P_{Y|V}$
and $P_{Z|V}$. Then, the code word in ${\cal V}^n$
is transmitted over prefixing channel $P_{X|V}^n$.
If we simulate channel $P_{X|V}^n$ by using
the channel simulation method \cite{steinberg:94},
then we need randomness with rate $H(X|V)$\footnote{We are
implicitly assuming that the empirical distributions of almost 
every code words are close to $P_V$, which is true if we 
use the random coding method.}.
By using this argument, we can derive the following inner bound on ${\cal R}_{ds}$
that can be achieved by combining the ordinary wire-tap channel coding and the 
channel prefixing by the channel simulation method.
\begin{proposition}
Let ${\cal R}_{sim}^*$ be a closed convex set consisting of
those rate pair $(R_d,R_s)$ for which
there exist auxiliary random variables $(U,V)$ such that
$U \markov V \markov X \markov (Y,Z)$ and 
\begin{eqnarray}
R_s &\le& I(V;Y|U) - I(V;Z|U), \\
R_d &\ge& I(V;Z|U) + H(X|V).
\end{eqnarray}
Then we have ${\cal R}_{sim}^*  \subset {\cal R}_{ds}$.
\end{proposition}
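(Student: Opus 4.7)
The plan is to achieve every $(R_d, R_s) \in \mathcal{R}_{sim}^*$ by concatenating a strongly secure wire-tap code on $V^n$ with a Steinberg--Verd\'u channel simulator that generates $X^n$ from $V^n$. Fix auxiliary $(U, V)$ with $U \markov V \markov X \markov (Y, Z)$ and treat $U$ as a time-sharing variable, realized by having Alice and Bob share a fixed $u^n$ in the typical set of $P_U$.

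First, I would invoke the standard strongly secure wire-tap coding theorem applied to the channel pair $P_{Y|V}^n, P_{Z|V}^n$ relative to $u^n$: for every $\delta > 0$ and every $R_s < I(V; Y|U) - I(V; Z|U)$, there exist for all sufficiently large $n$ an encoder $\tilde f_n : \mathcal{S}_n \times \tilde{\mathcal{A}}_n \to \mathcal{V}^n$ and decoder $\tilde g_n : \mathcal{Y}^n \to \mathcal{S}_n$ with $\tfrac{1}{n}\log|\mathcal{S}_n| \to R_s$ and $\tfrac{1}{n}\log|\tilde{\mathcal{A}}_n| \le I(V; Z|U) + \delta$, whose codewords are jointly typical with $u^n$ and which achieve both vanishing decoding error and vanishing $I(S_n; Z^n)$ when used directly over $P_{Z|V}^n$. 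This can be obtained as the single-layer specialization of the strong-secrecy argument developed for Theorem \ref{theorem:main}.

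Second, I would append a Steinberg--Verd\'u channel simulator \cite{steinberg:94}: on input $v^n$ jointly typical with $u^n$ and an independent uniform index $\hat a_n \in \hat{\mathcal{A}}_n$ of rate $H(X|V) + \delta$, the simulator produces $\hat X^n$ whose conditional law $P_{\hat X^n | V^n}(\cdot | v^n)$ is within total variation $\varepsilon_n \to 0$ (exponentially) of $P_{X|V}^n(\cdot | v^n)$, uniformly over typical $v^n$. Setting $f_n(s_n, \tilde a_n, \hat a_n)$ to be the simulator output on $\tilde f_n(s_n, \tilde a_n)$ and $\hat a_n$, and $g_n := \tilde g_n$, the total dummy randomness has rate at most $I(V; Z|U) + H(X|V) + 2\delta$, matching the claimed bound as $\delta \downarrow 0$.

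The main obstacle I expect will be to transfer the reliability and the strong-secrecy guarantees, originally established for the ideal prefixing channel $P_{X|V}^n$, to the simulated channel. Let $\bar P$ and $P$ denote the joint distributions of $(S_n, V^n, \tilde Z^n)$ under the ideal and the simulated schemes; by the simulator bound, $\|P - \bar P\|_{TV} \le \varepsilon_n$. Since $\varepsilon_n$ decays exponentially while $\log |\mathcal{Z}|^n$ is only linear in $n$, the Csisz\'ar--K\"orner $L_1$-continuity of entropy (e.g.\ \cite[Lemma~2.7]{csiszar-korner:11}) gives $|I_P(S_n; \tilde Z^n) - I_{\bar P}(S_n; \tilde Z^n)| \to 0$, and an analogous bound transfers the decoding error. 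Together with the wire-tap guarantees $I_{\bar P}(S_n; \tilde Z^n) \to 0$ and $\bar P_{err} \to 0$, this yields $(R_d, R_s) \in \mathcal{R}_{ds}$. As a consistency check, the identity $I(V; Z|U) + H(X|V) = I(X; Z|U) + H(X|V, Z) \ge I(X; Z|U)$ shows that $\mathcal{R}_{sim}^* \subset \mathcal{R}_{ds}^*$ via Corollary \ref{corollary-ds} directly, making visible the slack $H(X|V,Z)$ that is precisely the suboptimality of the channel-simulation approach.
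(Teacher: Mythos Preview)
Your approach is essentially the same as the paper's: the paper states this Proposition with only the informal preceding sketch as its justification (wire-tap coding on $(P_{Y|V},P_{Z|V})$ followed by channel simulation of $P_{X|V}^n$ at rate $H(X|V)$), and you have fleshed that sketch out, in particular supplying the total-variation/entropy-continuity argument needed to carry the strong-secrecy guarantee across the simulator. Your final ``consistency check'' via Corollary~\ref{corollary-ds} is in fact a complete and much shorter proof once Theorem~\ref{theorem:main} is in hand; the paper does not take that shortcut because the Proposition is meant to describe what the naive channel-simulation scheme achieves operationally, independent of the superposition result.
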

Since $I(X;Z|U) = I(V; Z|U) + I(X; Z|V) < I(V;Z|U) + H(X|V)$ in general,
the region ${\cal R}_{ds}^*$ is strictly broader than the region ${\cal R}_{sim}^*$,
i.e., the straightforward combination of the ordinary wire-tap channel coding and
the channel prefixing by the channel simulation is suboptimal.

\begin{corollary}
\label{corollary-more-capable}
Suppose that the channel $P_{Y|X}$ is more capable than
$P_{Z|X}$. Then the region ${\cal R}_{ds} = {\cal R}_{ds}^*$ 
is a closed convex set consisting of
those rate pair $(R_d,R_s)$ for which
there exists an auxiliary random variable $U$ such that
$U \markov X \markov (Y,Z)$ and 
\begin{eqnarray*}
R_s &\le& I(X;Y|U) - I(X;Z|U), \\
R_d &\ge& I(X;Z|U).
\end{eqnarray*}
Moreover, it may be assumed that the ranges of
$U$ may be assumed to satisfy $|{\cal U}| \le 2$.
\end{corollary}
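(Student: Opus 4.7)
The plan is to deduce the corollary from Corollary \ref{corollary-ds} by showing that, under the more-capable hypothesis, the choice $V=X$ is essentially optimal in the characterization of ${\cal R}_{ds}^*$. The inclusion $\supseteq$ (the set described in the statement is contained in ${\cal R}_{ds}^*$) is immediate: given $U$ with $U\markov X \markov (Y,Z)$, take $V=X$ in Corollary \ref{corollary-ds}; then the two constraints $R_s\le I(V;Y|U)-I(V;Z|U)$ and $R_d\ge I(X;Z|U)$ reduce exactly to those in the statement. So the real content is the opposite inclusion.

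For the $\subseteq$ direction, suppose $(R_d,R_s)\in {\cal R}_{ds}={\cal R}_{ds}^*$ with witnesses $(U,V)$ satisfying $U\markov V \markov X \markov (Y,Z)$. I would keep the same $U$ and simply drop $V$. The randomness constraint $R_d\ge I(X;Z|U)$ is unchanged, so only the secrecy constraint requires work. Using the chain rule together with the Markov chain $U\markov V \markov (X,Y,Z)$, one obtains the identities
\begin{eqnarray*}
I(X;Y|U) &=& I(V;Y|U) + I(X;Y|V), \\
I(X;Z|U) &=& I(V;Z|U) + I(X;Z|V),
\end{eqnarray*}
and hence
\begin{eqnarray*}
I(V;Y|U)-I(V;Z|U)
 &=& \bigl[I(X;Y|U)-I(X;Z|U)\bigr] \\
 & & -\bigl[I(X;Y|V)-I(X;Z|V)\bigr].
\end{eqnarray*}
The more-capable hypothesis, applied conditionally to each $P_{X|V=v}$ and averaged over $v$, gives $I(X;Y|V)\ge I(X;Z|V)$, so the bracketed correction term is nonnegative. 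Therefore $R_s\le I(V;Y|U)-I(V;Z|U)\le I(X;Y|U)-I(X;Z|U)$, and $(R_d,R_s)$ satisfies the constraints of the claimed region with the same $U$.

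Finally, for the cardinality bound, I would invoke the same reasoning noted in Remark \ref{remark:size-time-sharing}: the auxiliary $U$ appears in the region only through the linear averaging of the two functionals $\bigl(I(X;Z|U{=}u),\,I(X;Y|U{=}u)-I(X;Z|U{=}u)\bigr)$, so $U$ acts purely as a time-sharing variable. A standard Fenchel--Eggleston application to the continuous image of the (connected) simplex of input distributions $P_{X|U=u}$ in this two-dimensional space shows that restricting to $|{\cal U}|\le 2$ does not shrink the convex hull, which gives the claimed bound. The only step with any substance is the conditional use of more capable in the identity above; everything else is bookkeeping.
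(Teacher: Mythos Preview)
Your proof is correct and follows essentially the same approach as the paper: both establish the key inequality $I(V;Y|U)-I(V;Z|U)\le I(X;Y|U)-I(X;Z|U)$ via a chain-rule identity that isolates a correction term $I(X;Y|V)-I(X;Z|V)$ (the paper writes it as $I(X;Y|U,V)-I(X;Z|U,V)$, which is the same by the Markov chain), and then invoke the more-capable condition pointwise. The cardinality bound is handled in both cases by the time-sharing observation of Remark~\ref{remark:size-time-sharing}.
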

\begin{proof}
See Appendix \ref{proof-of-corollary-main}.
\end{proof}

As we can find from Corollary \ref{corollary-more-capable},
we do not need auxiliary random variable $V$ when the channel 
$P_{Y|X}$ is more capable than $P_{Z|X}$. Thus, two regions
${\cal R}_{ds}^*$ and ${\cal R}^*_{sim}$ coincide.

\subsection{Numerical Examples}

First, we consider an example such that ${\cal R}_{ds}^*$ and ${\cal R}^*_{sim}$ coincide.
Suppose that $P_{Y|X}$ and $P_{Z|X}$ are binary symmetric
channels with crossover probabilities $\varepsilon_1$ and $\varepsilon_2$ respectively,
where $\varepsilon_1 < \varepsilon_2$.
In this case, $P_{Z|X}$ is degraded version of $P_{Y|X}$, which also implies that
$P_{Y|X}$ is more capable than $P_{Z|X}$. Thus, we can apply 
Corollary \ref{corollary-more-capable}. Since the auxiliary random variable $U$ only
plays a role of time sharing, region ${\cal R}_{ds}$ is the convex hull of the rates $(R_d,R_s)$ satisfying
\begin{eqnarray*}
R_s &\le& [h(p * \varepsilon_1) - h(\varepsilon_1)] - [h(p * \varepsilon_2) - 
h(\varepsilon_2)], \\
R_d &\ge& h(p * \varepsilon_2) - h(\varepsilon_2)
\end{eqnarray*}
for some input distribution $0 \le P_X(0) = p \le 1$, where $h(\cdot)$ is the binary entropy function
\footnote{Note that
the base of the logarithm is $e$.}
and $x * y = x (1-y) + (1-x) y$ is the binary convolution. 
In Fig.~\ref{Fig:region}, for the case with $\varepsilon_1 = 0.1$ and $\varepsilon_2 = 0.2$ 
respectively, the region ${\cal R}_{ds}$
is plotted. The input distribution achieving $C_s$ is the uniform distribution, and
thus $R_s$ is constant when $R_d \ge \log 2 - h(0.2)$. By using a biased input distribution, 
$R_s$ can be positive
even if $R_d$ is smaller than $\log 2 - h(0.2)$.
\begin{figure}[t]
\centering
\includegraphics[width=\linewidth]{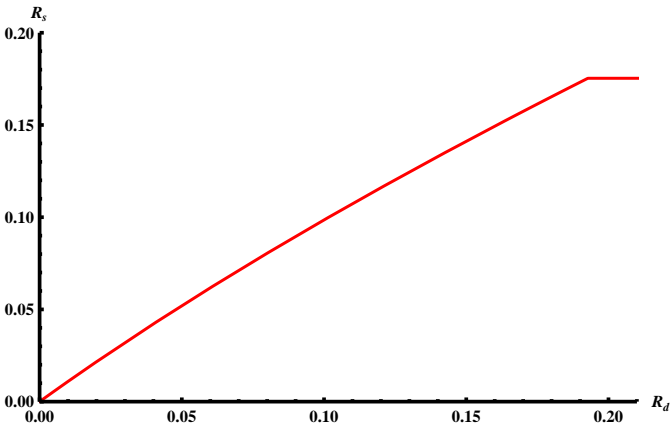}
\caption{The achievable region ${\cal R}_{ds}$ when $P_{Y|X}$ is BSC(0.1) and
$P_{Z|X}$ is BSC(0.2).}
\label{Fig:region}
\end{figure}


Next, we consider an example such that ${\cal R}_{ds}^*$ and ${\cal R}^*_{sim}$ do not 
coincide.
Suppose that $P_{Y|X}$ is a binary symmetric channel with crossover probability $
\varepsilon$
and $P_{Z|X}$ is a binary erasure channel with erasure probability $\delta$.
When $4 \varepsilon (1-\varepsilon) \log 2 < \delta \log 2 < h(\varepsilon)$, it is known 
that
$P_{Y|X}$ is not more capable than $P_{Z|X}$ \cite{nair:10}
and $C_s > 0$\footnote{Actually, for $4 \varepsilon (1-\varepsilon) \log 2 < \delta \log 2 < h(\varepsilon)$,
$P_{Z|X}$ is more capable than $P_{Y|X}$ but $P_{Z|X}$ is not less noisy
than $P_{Y|X}$ \cite{nair:10}. Thus, $I(X;Y) \le I(X;Z)$ for every $P_X$ but there exists
$V$ such that $I(V;Y) > I(V;Z)$, which means that $C_s > 0$ and $V$ is needed to achieve $C_s$.}.
For this example, we can compute the regions ${\cal R}_{ds} = {\cal R}_{ds}^*$
as follows. Since ${\cal R}_{ds}^*$ is a convex set,
for each $R_d$, we can calculate $\max\{R_s : (R_d, R_s) \in {\cal R}_{ds}^* \}$ by
minimizing
\begin{eqnarray}
\label{eq:optimization-bsc-bec}
\max_{P_{UVX}}[ I(V; Y|U) - I(V;Z|U) - \mu(I(X;Z|U) - R_d)]
\end{eqnarray}
with respect to $\mu \ge 0$, where $\mu$ is the slope of the supporting line of ${\cal 
R}_{ds}^*$.
Since $U$ only plays the role of the times sharing in Eq.~(\ref{eq:optimization-bsc-bec}),
we can take $U$ to be constant. Furthermore, by using 
the support lemma \cite{csiszar-korner:11},
we can assume that $|{\cal V}| \le |{\cal X}| = 2$.
Thus, Eq.~(\ref{eq:optimization-bsc-bec}) can be calculated by exhaustive
search of three parameters $P_V(0)$, $P_{X|V}(0|0)$, and $P_{X|V}(1|1)$.
Since $P_V(0) = \frac{1}{2}$ is not necessarily optimal\footnote{When there is no constraint 
on $R_d$, 
it is known that $P_V(0) = \frac{1}{2}$ and $P_{X|V}(0|0) = P_{X|V}(1|1)$ are optimal 
\cite{ozel:11}.} for
$R_d < (1-\delta) \log 2$,  
further reduction of parameters seems difficult.
The region ${\cal R}_{sim}^*$ can be computed in a similar manner.

In Fig.~\ref{Fig:region-bsc-bec}, for the case with $\varepsilon = 0.11$ and $\delta = 
0.45$ respectively,
the region ${\cal R}_{ds} = {\cal R}_{ds}^*$ and ${\cal R}_{sim}^*$ are plotted.
\begin{figure}[t]
\centering
\includegraphics[width=\linewidth]{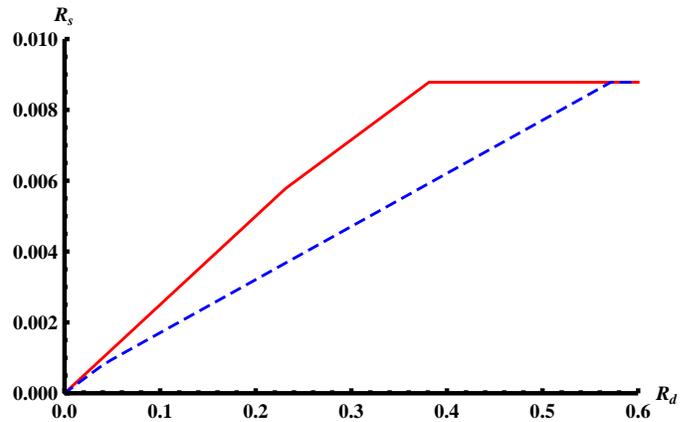}
\caption{The achievable region ${\cal R}_{ds} = {\cal R}^*_{ds}$ (solid line)
and suboptimal inner bound ${\cal R}_{sim}^*$ (dashed line) when $P_{Y|X}$ is BSC(0.11) 
and
$P_{Z|X}$ is BEC(0.45).}
\label{Fig:region-bsc-bec}
\end{figure}

\section{Channel Resolvability by Superposition}
\label{section:resolvability}

In this section, we consider the channel resolvability problem.
The result in this section will be utilized in the direct part of
the proof of Theorem \ref{theorem:main}.

In the channel resolvability problem, we want to simulate 
the response $P_Z^n$ by using channel $P_{Z|X}^n$ and
as small number of uniform 
randomness as possible, where $P_{Z}^n$ is the $n$th product of 
\begin{eqnarray*}
P_Z(z) = \sum_x P_X(x) P_{Z|X}(z|x)
\end{eqnarray*}
for input distribution $P_X$.
The simulation is conducted by a deterministic 
map $\varphi_n: {\cal B}_n \to {\cal X}^n$ and uniform random number 
$B_n$ on ${\cal B}_n$. Let 
\begin{eqnarray*}
P_{\tilde{Z}^n}(z^n) = \sum_{b_n \in {\cal B}_n} \frac{1}{|{\cal B}_n|} P_{Z|X}^n(z^n|
\varphi_n(b_n))
\end{eqnarray*} 
be the output distribution with map $\varphi_n$. 
In this paper, the accuracy of the simulation is evaluated by  the divergence criterion
$D(P_{\tilde{Z}^n} \|P^n_{Z})$.
It is  well known  \cite{han:93}\footnote{Actually, slightly weaker statement, i.e.,
$D(P_{\tilde{Z}^n} \| P^n_{Z})$ in
Eq.~(\ref{eq:proposition-resolvability-1}) is replaced by
$\frac{1}{n} D(P_{\tilde{Z}^n} \|P^n_{Z})$, was proved in \cite{han:93}.
The present statement can be derived from the result in \cite{hayashi:10}. } that if 
\begin{eqnarray}
\label{eq:resolvability-known-bound}
R > I(X;Z),
\end{eqnarray}
then there exists a sequence of
maps $\{ \varphi_n \}$ satisfying 
\begin{eqnarray}
\label{eq:proposition-resolvability-1}
\lim_{n \to \infty} D(P_{\tilde{Z}^n} \|P^n_{Z}) &=& 0, \\
\label{eq:proposition-resolvability-2}
\limsup_{n \to \infty} \frac{1}{n} \log |{\cal B}_n| &\le& R.
\end{eqnarray}
Typically, a sequence of maps realizing Eqs.~(\ref{eq:proposition-resolvability-1}) and 
(\ref{eq:proposition-resolvability-2})
is constructed by
randomly generating $|{\cal B}_n|$ codeword
$x^n_{1},\ldots, x^n_{|{\cal B}_n|}$ according to $P_{X}^n$.
We denote the generated code ${\cal C}_n$. Then we have the following proposition.
\begin{proposition}
\label{proposition:single-resolvability}
(\cite{hayashi:10})
For every $n \ge 1$, we have
\begin{eqnarray*}
\mathbb{E}_{{\cal C}_n}\left[ D(P_{\tilde{Z}^n} \| P_Z^n) \right]
	\le \frac{1}{\theta |{\cal B}_n|^\theta} e^{n \psi(\theta|P_{Z|X},P_X)}
\end{eqnarray*}
for $0 < \theta \le 1$, where $\mathbb{E}_{{\cal C}_n}[\cdot]$ means taking
the average over the randomly generated code ${\cal C}_n$,
and the function $\psi(\theta|P_{Z|X},P_X)$ is defined by
\begin{eqnarray}
\lefteqn{
\psi(\theta | P_{Z|X},P_X) } \nonumber \\
&=& \log \sum_z \left( \sum_x P_X(x) P_{Z|X}(z|x)^{1+\theta} \right) P_Z(z)^{-\theta}. 
\label{eq:definition-of-psi-2}
\end{eqnarray}
\end{proposition}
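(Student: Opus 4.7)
The plan is to first reduce the Kullback--Leibler divergence to a R\'enyi-type quantity that factorizes under the product structure of the memoryless channel, and then to bound the resulting expectation over the random codebook by a moment inequality tailored to i.i.d.\ sums of nonnegative random variables. The elementary inequality $\log y \le y - 1$ applied to $y = (P(z)/Q(z))^\theta$ gives $\log(P(z)/Q(z)) \le \tfrac{1}{\theta}((P(z)/Q(z))^\theta - 1)$, and integrating against $P$ yields
\begin{equation*}
D(P\|Q) \le \frac{1}{\theta}\left( \sum_z P(z)^{1+\theta} Q(z)^{-\theta} - 1 \right).
\end{equation*}
Setting $P = P_{\tilde{Z}^n}$, $Q = P_Z^n$ and taking expectation over the random code $\mathcal{C}_n$, the task reduces to bounding $\mathbb{E}_{\mathcal{C}_n}[\sum_{z^n} P_{\tilde{Z}^n}(z^n)^{1+\theta} P_Z^n(z^n)^{-\theta}] - 1$.

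For that, set $N = |\mathcal{B}_n|$ and, for fixed $z^n$, $Y_b = P_{Z|X}^n(z^n | X^n_b)$; these are nonnegative and i.i.d.\ with mean $\mu = P_Z^n(z^n)$, and $P_{\tilde{Z}^n}(z^n) = \tfrac{1}{N} \sum_b Y_b$. I plan to prove the moment inequality
\begin{equation*}
\mathbb{E}\left[\left(\frac{1}{N}\sum_b Y_b\right)^{1+\theta}\right] \le \mu^{1+\theta} + \frac{1}{N^\theta}\, \mathbb{E}[Y_1^{1+\theta}].
\end{equation*}
Writing $T = \sum_b Y_b$ and $T_b = T - Y_b$, subadditivity of $x \mapsto x^\theta$ for $\theta \in (0,1]$ gives
\begin{equation*}
T^{1+\theta} = \sum_b Y_b (T_b + Y_b)^\theta \le \sum_b Y_b\, T_b^\theta + \sum_b Y_b^{1+\theta}.
\end{equation*}
Independence and concavity of $x \mapsto x^\theta$ yield $\mathbb{E}[Y_b T_b^\theta] = \mu\, \mathbb{E}[T_b^\theta] \le \mu \cdot ((N-1)\mu)^\theta \le N^\theta \mu^{1+\theta}$, so $\mathbb{E}[T^{1+\theta}] \le N^{1+\theta} \mu^{1+\theta} + N\, \mathbb{E}[Y_1^{1+\theta}]$; dividing by $N^{1+\theta}$ delivers the claim.

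To assemble, multiply the moment inequality by $P_Z^n(z^n)^{-\theta}$ and sum over $z^n$. The mean contribution is $\sum_{z^n} P_Z^n(z^n)^{1+\theta} P_Z^n(z^n)^{-\theta} = \sum_{z^n} P_Z^n(z^n) = 1$, which exactly cancels the $-1$ produced in the first step. The fluctuation part factorizes across the $n$ coordinates by the product structure of $P_X^n$ and $P_{Z|X}^n$, giving
\begin{equation*}
\sum_{z^n} P_Z^n(z^n)^{-\theta} \sum_{x^n} P_X^n(x^n) P_{Z|X}^n(z^n | x^n)^{1+\theta} = e^{n\, \psi(\theta | P_{Z|X}, P_X)}
\end{equation*}
by the definition of $\psi$ in Eq.~(\ref{eq:definition-of-psi-2}). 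Collecting factors then yields the claimed upper bound $\tfrac{1}{\theta\, |\mathcal{B}_n|^\theta}\, e^{n\, \psi(\theta | P_{Z|X}, P_X)}$.

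The principal obstacle is the moment inequality: a direct application of Jensen's inequality to the convex map $x \mapsto x^{1+\theta}$ points in the wrong direction, so the subadditivity of $x \mapsto x^\theta$ must be deployed carefully on the cross-term while the diagonal contribution $Y_b^{1+\theta}$ is left intact. It is equally critical that the mean part of this bound be sharp enough to collapse to $1$ and annihilate the $-1$ coming from the R\'enyi reduction --- any slack there would obliterate the exponential decay in $n$.
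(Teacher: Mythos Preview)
Your proof is correct and complete. The paper itself does not supply a proof of this proposition (it is quoted from \cite{hayashi:10}), and your argument is essentially the one given there: bound $D(\cdot\|\cdot)$ by the R\'enyi-type quantity via $\log y \le y-1$, control the $(1+\theta)$-th moment of an i.i.d.\ average using the subadditivity of $x\mapsto x^\theta$ on the cross-term, and then let the memoryless structure produce the single-letter exponent $\psi$.
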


In this paper we construct a sequence of maps 
realizing Eq.~(\ref{eq:proposition-resolvability-1})
by a different method. Let $P_{VX}$ be a distribution such that the marginal 
is $P_X$. We first randomly generate $|{\cal M}_{2,n}|$ codeword
$v^n_1,\ldots, v^n_{|{\cal M}_{2,n}|}$ according to the distribution $P_V^n$.
We denote the generated code by ${\cal C}_{2,n}$.
Then for each $1 \le i \le |{\cal M}_{2,n}|$, we randomly generate 
$|{\cal M}_{1,n}|$ codeword $x^n_{i1},\ldots, x^n_{i |{\cal M}_{1,n}|}$
according to the distribution $P_{X|V}^n(\cdot | v^n_i)$.
We denote the generated code by ${\cal C}_{1,n}$.
The empirical distribution of the codeword is given by
\begin{eqnarray*}
P_{\tilde{V}^n \tilde{X}^n}(v^n,x^n) 
&=& \sum_{i \in {\cal M}_{2,n}} \sum_{j \in {\cal M}_{1,n}}  \frac{1}{|{\cal M}_{2,n}| |{\cal 
M}_{1,n}|} \\
&&~~~   \bol{1}[v^n_i = v^n, x^n_{ij} = x^n], \\
P_{\tilde{V}^n}(v^n)
&=& \sum_{i \in {\cal M}_{2,n}} \frac{1}{|{\cal M}_{2,n}|} \bol{1}[v^n_i = v^n], \\
P_{\tilde{X}^n|\tilde{V}^n}(x^n|v^n) 
&=& \frac{P_{\tilde{V}^n \tilde{X}^n}(v^n,x^n) }{P_{\tilde{V}^n}(v^n)},    
\end{eqnarray*}
and the output distribution is given by
\begin{eqnarray*}
P_{\tilde{Z}^n}(z^n) = \sum_{v^n, x^n} P_{\tilde{V}^n \tilde{X}^n}(v^n, x^n) 
P_{Z|X}^n(z^n|x^n).
\end{eqnarray*}
For this construction, we have the following lemma.
\begin{lemma}
\label{theorem:resolvability}
For every $n \ge 1$, we have
\begin{eqnarray}
\lefteqn{ \mathbb{E}_{{\cal C}_{1,n} {\cal C}_{2,n}} \left[ D(P_{\tilde{Z}^n} \| P_Z^n) \right] } 
\nonumber \\
 &\le& \frac{1}{\theta |{\cal M}_{1,n}|^{\theta}} e^{n \psi(\theta| P_{Z|X},P_{X|V},P_V)} 
\nonumber \\
 && + \frac{1}{\theta^\prime |{\cal M}_{2,n}|^{\theta^\prime}} e^{n \psi(\theta^\prime| 
P_{Z|V},P_V)}
  \label{eq:resolvability-bound-divergence}
\end{eqnarray}
for $0 < \theta, \theta^\prime \le 1$,
where $\mathbb{E}_{{\cal C}_{1,n} {\cal C}_{2,n}}[ \cdot ]$ means taking
the average over the randomly generated codes ${\cal C}_{1,n}$ and ${\cal C}_{2,n}$,
the function $\psi(\theta | P_{Z|X},P_{X|V},P_V)$ is defined as
\begin{eqnarray}
\lefteqn{
\psi(\theta | P_{Z|X},P_{X|V},P_V) } \nonumber \\
&=& \log \sum_v P_V(v) 
 \sum_z \nonumber \\
 && \left( \sum_x P_{X|V}(x|v) P_{Z|X}(z|x)^{1+\theta} \right) P_{Z|V}(z|v)^{- \theta}, 
\nonumber \\
 \label{eq:definition-of-psi}
\end{eqnarray}
and
$\psi(\theta | \cdot,\cdot)$ is defined in Eq.~(\ref{eq:definition-of-psi-2}).
\end{lemma}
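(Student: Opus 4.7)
The plan is to reduce Lemma \ref{theorem:resolvability} to two applications of the single-layer bound in Proposition \ref{proposition:single-resolvability} by splitting the divergence at an intermediate distribution that corresponds to ``resolving'' only the outer layer. Conditional on the outer codebook ${\cal C}_{2,n}$, introduce
\begin{eqnarray*}
P_{\bar{Z}^n}(z^n) = \frac{1}{|{\cal M}_{2,n}|} \sum_{i \in {\cal M}_{2,n}} P_{Z|V}^n(z^n | v^n_i),
\end{eqnarray*}
which is the output distribution one would obtain if the inner layer were replaced by the true conditional channel $P_{Z|V}^n$ at each $v^n_i$. The key observation is that $\mathbb{E}_{{\cal C}_{1,n}}[P_{\tilde Z^n}(z^n) \mid {\cal C}_{2,n}] = P_{\bar Z^n}(z^n)$.

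Using the identity $D(P\|Q) = D(P\|R) + \sum_{z^n} P(z^n) \log \frac{R(z^n)}{Q(z^n)}$ with $P = P_{\tilde Z^n}$, $Q = P_Z^n$, and $R = P_{\bar Z^n}$, and taking the expectation over ${\cal C}_{1,n}$ in the cross term (where the log-ratio does not depend on ${\cal C}_{1,n}$), the cross term collapses to $D(P_{\bar Z^n}\|P_Z^n)$. Averaging subsequently over ${\cal C}_{2,n}$ yields the exact split
\begin{eqnarray*}
\lefteqn{\mathbb{E}_{{\cal C}_{1,n}{\cal C}_{2,n}}[D(P_{\tilde Z^n} \| P_Z^n)]} \\
 &=& \mathbb{E}[D(P_{\tilde Z^n} \| P_{\bar Z^n})] + \mathbb{E}[D(P_{\bar Z^n}\|P_Z^n)].
\end{eqnarray*}
The second term is a single-layer resolvability problem: $P_{\bar Z^n}$ arises from $|{\cal M}_{2,n}|$ codewords drawn i.i.d.\ from $P_V^n$ passed through the channel $P_{Z|V}^n$, with target $P_Z^n$. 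Applying Proposition \ref{proposition:single-resolvability} to the channel $P_{Z|V}$ with input $P_V$ bounds it by $\frac{1}{\theta^\prime |{\cal M}_{2,n}|^{\theta^\prime}} e^{n\psi(\theta^\prime | P_{Z|V}, P_V)}$.

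The first term is the main technical step. Joint convexity of divergence gives
\begin{eqnarray*}
D(P_{\tilde Z^n} \| P_{\bar Z^n}) \le \frac{1}{|{\cal M}_{2,n}|} \sum_i D(P_{\tilde Z^n | i} \| P_{Z|V}^n(\cdot|v^n_i)),
\end{eqnarray*}
where $P_{\tilde Z^n|i}$ is the mixture induced by the $i$th inner cloud. Conditional on $v^n_i$, each summand is a resolvability problem for $P_{Z|X}^n$ with $|{\cal M}_{1,n}|$ codewords drawn independently from the \emph{product but non-i.i.d.}\ distribution $\prod_{t=1}^n P_{X|V}(\cdot|v_{it})$. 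I would invoke the natural extension of Proposition \ref{proposition:single-resolvability} to such product inputs, whose exponent factorizes letter by letter as $\sum_t \log \sum_z (\sum_x P_{X|V}(x|v_{it}) P_{Z|X}(z|x)^{1+\theta}) P_{Z|V}(z|v_{it})^{-\theta}$. Taking the expectation over $v^n_i \sim P_V^n$ and exploiting independence across $t$ tensorizes the exponential, producing $e^{n\psi(\theta | P_{Z|X}, P_{X|V}, P_V)}$ with the $\psi$ defined in Eq.~(\ref{eq:definition-of-psi}). Summing the two contributions yields Eq.~(\ref{eq:resolvability-bound-divergence}).

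The main obstacle is verifying the non-identically-distributed version of the Hayashi bound and then carefully averaging its letterwise exponent over $v^n_i$ so that it collapses to the single-letter quantity $\psi(\theta | P_{Z|X}, P_{X|V}, P_V)$. Once the intermediate distribution $P_{\bar Z^n}$ is chosen correctly, the divergence split is elementary and the rest is bookkeeping.
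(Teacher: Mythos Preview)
Your proposal is correct and follows essentially the same route as the paper: introduce the intermediate output $P_{\bar Z^n}=\mathbb{E}_{{\cal C}_{1,n}}[P_{\tilde Z^n}\mid {\cal C}_{2,n}]$, split the expected divergence exactly at that point, and bound each piece by Proposition \ref{proposition:single-resolvability}. The only cosmetic differences are that the paper (i) reduces to $n=1$ first and then invokes the additivity $\psi(\theta|P_{Z|X}^n,P_{X|V}^n,P_V^n)=n\psi(\theta|P_{Z|X},P_{X|V},P_V)$, and (ii) obtains your convexity step via monotonicity of divergence under the map $(I,Z)\mapsto Z$. Your ``main obstacle''---a non-i.i.d.\ product version of Hayashi's bound---is not an obstacle at all: Proposition \ref{proposition:single-resolvability} is stated for arbitrary single-shot input/channel pairs, so it applies verbatim to $\bigl(P_{Z|X}^n,\prod_t P_{X|V}(\cdot|v_{it})\bigr)$, and the resulting $e^{\psi}$ factorizes over $t$; averaging over $v_i^n\sim P_V^n$ then tensorizes exactly as you describe.
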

\begin{proof}
See Appendix \ref{proof-of-theorem:resolvability}.
\end{proof}
\begin{corollary}
\label{corollary:resolvability}
If $R_1 > I(X; Z|V)$ and $R_2 > I(V;Z)$, there exists a sequence of
map $\varphi_n:{\cal M}_{1,n} \times {\cal M}_{2,n} \to {\cal X}^n$ such that
\begin{eqnarray}
\label{eq:corollary-resolvability-2}
\lim_{n \to \infty} D(P_{\tilde{Z}^n} \|P^n_{Z}) &=& 0, \\
\label{eq:corollary-resolvability-3}
\limsup_{n \to \infty} \frac{1}{n} \log |{\cal M}_{1,n}| &\le& R_1, \\
\label{eq:corollary-resolvability-4}
\limsup_{n \to \infty} \frac{1}{n} \log |{\cal M}_{2,n}| &\le& R_2.
\end{eqnarray}
\end{corollary}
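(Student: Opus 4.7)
The plan is to derive Corollary~\ref{corollary:resolvability} from Lemma~\ref{theorem:resolvability} by combining a derivative-at-zero analysis of the exponent functions $\psi(\theta | \cdot)$ with a standard derandomization step.

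First I would verify the behavior of the two exponents near the origin. Direct substitution yields $\psi(0 | P_{Z|X}, P_{X|V}, P_V) = \log \sum_v P_V(v) \sum_z P_{Z|V}(z|v) = 0$, and an identical calculation gives $\psi(0 | P_{Z|V}, P_V) = 0$. Differentiating the arguments of the logarithms term by term, evaluating at zero, and using the Markov chain $V \markov X \markov Z$ to identify $H(Z|X)$ with $H(Z|X,V)$, I obtain
\begin{eqnarray*}
\left. \frac{\partial \psi(\theta | P_{Z|X}, P_{X|V}, P_V)}{\partial \theta} \right|_{\theta = 0} &=& H(Z|V) - H(Z|X) \\ &=& I(X; Z | V),
\end{eqnarray*}
and similarly the derivative of $\psi(\theta' | P_{Z|V}, P_V)$ at $\theta' = 0$ equals $I(V; Z)$.

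Given $R_1 > I(X;Z|V)$ and $R_2 > I(V;Z)$, by continuity I can then pick $\theta, \theta' \in (0, 1]$ and $\eta > 0$ such that $\psi(\theta | P_{Z|X}, P_{X|V}, P_V) < \theta (R_1 - 2 \eta)$ and $\psi(\theta' | P_{Z|V}, P_V) < \theta' (R_2 - 2 \eta)$. Choosing $|{\cal M}_{1,n}|, |{\cal M}_{2,n}|$ so that $\frac{1}{n} \log |{\cal M}_{1,n}| \ge R_1 - \eta$ and $\frac{1}{n} \log |{\cal M}_{2,n}| \ge R_2 - \eta$ for all sufficiently large $n$, while still satisfying Eqs.~(\ref{eq:corollary-resolvability-3}) and (\ref{eq:corollary-resolvability-4}), substitution into Eq.~(\ref{eq:resolvability-bound-divergence}) yields
\begin{eqnarray*}
\mathbb{E}_{{\cal C}_{1,n} {\cal C}_{2,n}} \left[ D(P_{\tilde{Z}^n} \| P_Z^n) \right] \le \frac{1}{\theta} e^{- n \theta \eta} + \frac{1}{\theta'} e^{- n \theta' \eta},
\end{eqnarray*}
which decays exponentially in $n$. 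A standard derandomization argument --- some realization of the random codebooks must achieve at most the expected divergence --- then produces a deterministic map $\varphi_n: {\cal M}_{1,n} \times {\cal M}_{2,n} \to {\cal X}^n$ satisfying Eq.~(\ref{eq:corollary-resolvability-2}).

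The main step requiring care is verifying that the partial derivatives of the two exponent functions at the origin coincide with the target mutual informations; the superposition structure of the codebook enters only through the outer conditioning on $V$ inside $\psi(\theta | P_{Z|X}, P_{X|V}, P_V)$, and beyond this derivative computation the argument is essentially the single-layer resolvability proof underlying Proposition~\ref{proposition:single-resolvability}, applied simultaneously to the inner and outer layers with a common exponential decay rate $\eta$.
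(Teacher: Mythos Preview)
Your argument is correct and follows essentially the same route as the paper: compute $\psi'(0)$ to identify the two mutual informations, pick small $\theta,\theta'$ so that $\psi(\theta)/\theta$ and $\psi(\theta')/\theta'$ stay below the assumed rates, and conclude exponential decay of the expected divergence in Lemma~\ref{theorem:resolvability}. The only cosmetic difference is that you make the derandomization step explicit, whereas the paper leaves it implicit after bounding the expectation.
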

\begin{proof}
See Appendix \ref{proof-of-corollary:resolvability}.
\end{proof}

From Corollary \ref{corollary:resolvability}, we find that the channel resolvability
coding scheme proposed in this section can achieve the rate shown
in Eq.~(\ref{eq:resolvability-known-bound}), i.e., $I(X; Z) = I(V;Z) + I(X;Z|V)$.
Splitting the randomness into two part does not have any meaning in
the channel resolvability coding, but as we will find in Section 
\ref{proof:direct-theorem-main},
this coding scheme does have meaning when we send the confidential message.

\section{Proofs of Main Results}
\label{proof-of-main-result}

\subsection{Proof of Direct Part of Theorem \ref{theorem:main}}
\label{proof:direct-theorem-main}

We prove the direct part of Theorem \ref{theorem:main} by using the
result in Section \ref{section:resolvability}. 
The direct part of the theorem follows from the following 
Lemma \ref{lemma:inner-bound} and Lemma \ref{lemma:rate-spliting}.
\begin{lemma}
\label{lemma:inner-bound}
Let ${\cal R}^{(in)}$ be a closed convex set consisting of 
those quadruples $(R_d,R_0,R_1,R_s)$ for which there
exist auxiliary random variables $(U,V)$ such that
$U \markov V \markov X \markov (Y,Z)$ and
\begin{eqnarray}
R_0 &\le& I(U;Z), \label{eq:inner-1} \\
R_1 + R_s &\le& I(V;Y|U), \label{eq:inner-2} \\
R_0 + R_1 + R_s &\le& I(V; Y), \label{eq:inner-3} \\
R_1  &\ge& I(V; Z|U), \label{eq:inner-4} \\
R_d &\ge& I(X;Z|V). \label{eq:inner-6}
\end{eqnarray}
Then ${\cal R}^{(in)} \subset {\cal R}$.
\end{lemma}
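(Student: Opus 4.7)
The plan is to combine the three-layer superposition random coding scheme of Chia and El Gamal with the channel resolvability result of Corollary \ref{corollary:resolvability} (applied conditionally) to obtain strong security, and then to derandomize in the standard way. Fix a joint distribution $P_{UVX}$ with $U \markov V \markov X \markov (Y,Z)$ and a small slack $\delta>0$. For each $k_n\in{\cal K}_n$, generate $u^n(k_n)$ i.i.d.\ from $P_U^n$. For each $(k_n,\ell_n,s_n)$, generate $v^n(k_n,\ell_n,s_n)$ i.i.d.\ from $P_{V|U}^n(\cdot|u^n(k_n))$. For each $(k_n,\ell_n,s_n,a_n)$, generate $x^n(k_n,\ell_n,s_n,a_n)$ i.i.d.\ from $P_{X|V}^n(\cdot|v^n(k_n,\ell_n,s_n))$; this collection of codewords defines the deterministic encoder $f_n$.

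For reliability at Bob, I would use a simultaneous joint-typicality decoder for the pair $(u^n,v^n)$. A routine packing-lemma analysis shows the expected error probability vanishes whenever $R_0+R_1+R_s \le I(V;Y)-\delta$ and $R_1+R_s \le I(V;Y|U)-\delta$, matching (\ref{eq:inner-3}) and (\ref{eq:inner-2}). At Eve, a typicality decoder for $u^n$ against $P_Z^n$ has vanishing error whenever $R_0 \le I(U;Z)-\delta$, giving (\ref{eq:inner-1}).

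For strong security, fix a realization of the outer codebook $\{u^n(k_n)\}$ and any $s_n$; then the inner codebook $\{v^n(k_n,\ell_n,s_n)\}_{\ell_n}$ together with its attached prefixing codewords $\{x^n(k_n,\ell_n,s_n,a_n)\}_{a_n}$ is exactly the two-layer superposition resolvability construction of Section \ref{section:resolvability}, with $|{\cal M}_{1,n}|=|{\cal A}_n|$ playing the inner role and $|{\cal M}_{2,n}|=|{\cal L}_n|$ the outer role, targeted at $P_{Z|U}^n(\cdot|u^n(k_n))$. Applying Lemma \ref{theorem:resolvability} conditionally on $u^n$ and using the Markov identity $I(X;Z|V,U)=I(X;Z|V)$, the hypotheses (\ref{eq:inner-4}) and (\ref{eq:inner-6}) (each shrunk by $\delta$) yield
\[
\mathbb{E}_{{\cal C}_n}\!\big[D\big(P_{\tilde Z^n|S_n=s_n,U^n=u^n}\,\big\|\,P_{Z|U}^n(\cdot|u^n)\big)\big]\to 0
\]
uniformly in $(s_n,u^n)$. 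Since $S_n$ is independent of $K_n$, and hence of $U^n$, we have $I(S_n;\tilde Z^n)\le I(S_n;\tilde Z^n|U^n)$, and the right-hand side is in turn dominated by the conditional divergence displayed above via the identity $I(S;Z|U)=\min_Q D(P_{Z|SU}\|Q|P_{SU})\le D(P_{Z|SU}\|P_{Z|U}|P_{SU})$. Averaging over $S_n$ and $U^n$ therefore drives $\mathbb{E}_{{\cal C}_n}[I(S_n;\tilde Z^n)]$ to zero.

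Because the three expected quantities (Bob error, Eve error, and $I(S_n;\tilde Z^n)$) all vanish, a Markov/union argument selects a deterministic realization $f_n$ of the codebook that simultaneously satisfies all three requirements, and letting $\delta\to 0$ closes the region to recover (\ref{eq:inner-1})--(\ref{eq:inner-6}). I expect the real obstacle to lie in the security step: handling the conditional application of Lemma \ref{theorem:resolvability} while the outer codewords $\{u^n(k_n)\}$ are themselves random requires a careful exchange of expectations (over $\{u^n(k_n)\}$ and over the inner/prefixing codebooks), and converting the resulting codebook-averaged conditional divergence into a bound on $I(S_n;\tilde Z^n)$ for a single selected code relies on the chain of inequalities above together with the independence of the messages. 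The $\delta$-shrinkage followed by taking the closure is the standard device to bridge the strict and non-strict inequalities.
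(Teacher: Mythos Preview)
Your proposal is essentially the same argument as the paper's. The paper builds exactly the three-layer random code you describe, uses threshold (information-spectrum) decoders rather than joint typicality to obtain (\ref{eq:inner-1})--(\ref{eq:inner-3}), and for secrecy bounds $D(f)=I(S_n;\tilde Z^n)$ by $I(S_n;\tilde Z^n|K_n)$ (equivalently your $I(S_n;\tilde Z^n|U^n)$), then replaces $P_{\tilde Z^n|K_n}$ by the reference $P^n_{Z|U}(\cdot|u^n_{k_n})$ and applies Lemma~\ref{theorem:resolvability} for each $(k_n,s_n)$ before averaging over ${\cal C}_0$.

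One small correction: your claim that the conditional resolvability bound holds ``uniformly in $(s_n,u^n)$'' is not quite right. The exponent in Lemma~\ref{theorem:resolvability} applied with $P_{V|U}^n(\cdot|u^n)$ in place of $P_V^n$ depends on the type of $u^n$, so the bound is not uniform over all $u^n$. The paper handles this exactly as you anticipate in your final paragraph: one first averages over ${\cal C}_1,{\cal C}_2$ with $u^n_{k_n}$ fixed, obtaining a bound of the form $e^{\psi(\theta|P_{Z|X},P_{X|V},P_{V|U}(\cdot|u^n_{k_n}))}$, and then averages over ${\cal C}_0$, using that $e^{\psi}$ is linear in its last argument to collapse the expectation to $e^{n\psi(\theta|P_{Z|X},P_{X|V},P_V)}$. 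With that correction your sketch matches the paper's proof.
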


We note the following observation.
From the definition of the problem, if 
\begin{eqnarray*}
(R_d - r_d ,R_0 + r_0, R_1 - r_0 - r_s + r_d, R_s + r_s) \in {\cal R}
\end{eqnarray*} 
for some $r_d,r_0,r_s \ge 0$,
then we also have 
$(R_d,R_0,R_1,R_s) \in {\cal R}$. By using this argument, we have the following.
\begin{lemma}
\label{lemma:rate-spliting}
We have ${\cal R}^* \subset {\cal R}$.
\end{lemma}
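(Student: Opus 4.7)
The plan is to reduce Lemma \ref{lemma:rate-spliting} to Lemma \ref{lemma:inner-bound} via the elementary rate-splitting observation stated just above this lemma. Specifically, for each $(R_d, R_0, R_1, R_s) \in \mathcal{R}^*$ with witnessing auxiliary random variables $(U,V)$, I aim to exhibit non-negative parameters $r_d, r_0, r_s$ such that the shifted tuple
\[(R_d - r_d,\; R_0 + r_0,\; R_1 - r_0 - r_s + r_d,\; R_s + r_s)\]
lies in $\mathcal{R}^{(in)}$ with the same $(U, V)$; the observation then places the original tuple in $\mathcal{R}$. A pointwise treatment in $(U,V)$ is enough because $\mathcal{R}$ is a closed convex set (the latter by time-sharing).

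I will set $r_s = 0$ throughout, since $\mathcal{R}^{(in)}$ has no explicit upper bound on $R_s$ (the implicit one comes from combining (\ref{eq:inner-2}) and (\ref{eq:inner-4})), so enlarging $R_s$ offers no leverage. After this choice, the inequalities to be satisfied become
\begin{align*}
& R_0 + r_0 \le I(U;Z), \\
& I(V;Z|U) - R_1 \le r_d - r_0 \le I(V;Y|U) - R_1 - R_s, \\
& r_d \le I(V;Y) - R_0 - R_1 - R_s, \\
& r_d \le R_d - I(X;Z|V),
\end{align*}
together with $r_0, r_d \ge 0$.

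My case distinction is on the sign of $\alpha := I(V;Y|U) - R_1 - R_s$. If $\alpha \ge 0$, I take $r_0 = 0$ and pick any $r_d$ in the interval $[\max(0, I(V;Z|U) - R_1),\; \min(\alpha,\; I(V;Y) - R_0 - R_1 - R_s,\; R_d - I(X;Z|V))]$. Non-emptiness of this interval reduces, bound by bound, to (\ref{theorem:main-condition4}), to (\ref{theorem:main-condition3}) rewritten via $I(X;Z|U) = I(V;Z|U) + I(X;Z|V)$, and to (\ref{theorem:main-condition1})--(\ref{theorem:main-condition2}) (which together yield $R_0 + R_1 + R_s \le I(V;Y)$ and $R_0 + R_s \le I(U;Y) + I(V;Y|U) - I(V;Z|U)$). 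If $\alpha < 0$, I instead take $r_0 = -\alpha > 0$ and $r_d = 0$; the constraint $r_0 \le I(U;Z) - R_0$ rearranges to $R_0 + R_1 + R_s \le I(V;Y|U) + I(U;Z)$, which follows from (\ref{theorem:main-condition2}) via $\min[I(U;Y), I(U;Z)] \le I(U;Z)$, and the inner sandwich collapses to (\ref{theorem:main-condition4}).

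The main point of care is making each $\mathcal{R}^*$ constraint feed into exactly one inner-bound constraint after the shift. Operationally, $r_0$ transfers private rate into the common slot, using the slack that the $\min$ in (\ref{theorem:main-condition2}) provides, while $r_d$ transfers dummy randomness into the private slot using the gap $R_d - I(X;Z|V)$ from (\ref{theorem:main-condition5}); the sign of $\alpha$ is exactly the indicator of which of these two transfers is needed. I expect no harder obstacle than this accounting.
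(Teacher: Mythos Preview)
Your proposal is correct and follows essentially the same explicit rate-splitting approach as the paper's Appendix~\ref{appendix:rate-spliting}, which also splits on the sign of $I(V;Y|U)-R_1-R_s$ and constructs $(r_d,r_0,r_s)$ by hand. The only cosmetic differences are that the paper uses a nonzero $r_s$ in the $\alpha<0$ case and further splits the $\alpha\ge 0$ case into two subcases according to the sign of $I(V;Z|U)-R_1$, whereas you keep $r_s=0$ throughout and handle $\alpha\ge 0$ uniformly via the interval for $r_d$ (the paper's choice is simply the lower endpoint of your interval).
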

\begin{proof}
See Appendix \ref{appendix:rate-spliting}.
\end{proof}

We now prove Lemma \ref{lemma:inner-bound}.
For a while, we consider the case with $n=1$ and omit the superscript
and subscript to simplify the notation. 
For each common message $k \in {\cal K}$, we randomly generate
codeword $u_k$ according to distribution $P_U$. We denote such a
code ${\cal C}_0$. For each $k$ and for each 
$(\ell,s) \in {\cal L} \times {\cal S}$, we randomly
generate codeword $v_{k \ell s}$ according to distribution
$P_{V|U}(\cdot| u_k)$. We denote such a code ${\cal C}_2$.
For each $(k,\ell,s)$ and for each $a \in {\cal A}$, we randomly
generate codeword $x_{k \ell s a}$ according to distribution
$P_{X|V}(\cdot | v_{k \ell s})$. We denote such a code ${\cal C}_1$.

For real numbers $\alpha_0, \alpha_1,\alpha_2 \ge 0$ specified later, let 
\begin{eqnarray*}
{\cal T}_0 &=& \{ (u,z): P_{Z|U}(z|u) \ge e^{\alpha_0} P_Z(z) \}, \\
{\cal T}_1 &=& \{ (u,v,y): P_{Y|V}(y|v) \ge e^{\alpha_1} P_{Y|U}(y|u) \}, \\
{\cal T}_2 &=& \{ (v,y): P_{Y|V}(y|v) \ge e^{\alpha_2} P_Y(y) \}, 
\end{eqnarray*}
and let ${\cal T} = {\cal T}_1 \cap ({\cal U} \times {\cal T}_2)$.
Eve's decoding region is defined by
\begin{eqnarray*}
{\cal D}_k = \{ z: (u_k,z) \in {\cal T}_{0}, (u_{\hat{k}},z) \notin {\cal T}_{0} ~\forall \hat{k} 
\neq k \},
\end{eqnarray*}
i.e., $\phi(z) = k$ if $z \in {\cal D}_k$\footnote{If $z \notin {\cal D}_k$ for every $k \in 
{\cal K}$, 
we set $\phi(z) = 1$, which is not important in our analysis of error probability. A similar
remark is also applied for Bob's decoder.}. 
Bob only decode $(k,\ell,s)$ and he does not decode dummy randomness $a \in {\cal A}$.
Bob's decoding region is defined by 
\begin{eqnarray*}
\lefteqn{ {\cal D}_{k \ell s} = \{ y: (u_k,v_{k \ell s},y) \in {\cal T}, } \\
	&& (u_{\hat{k}},v_{\hat{k} \hat{\ell} \hat{s}}, y) \notin {\cal T}~
	\forall
	(\hat{k},\hat{\ell},\hat{s}) \neq (k,\ell,s) \},
\end{eqnarray*}
i.e., $g(y) = (k,\ell,s)$ if $y \in {\cal D}_{k \ell s}$.

By the above code construction, we have the following.
\begin{lemma}
\label{lemma:error-analysis}
We have
\begin{eqnarray}
\lefteqn{ \mathbb{E}_{{\cal C}_0 {\cal C}_1 {\cal C}_2} [P_{err}(f,g)] } \nonumber \\
	&\le& P_{UVY}({\cal T}_1^c) + P_{VY}({\cal T}_2^c) \nonumber \\
	&& + |{\cal L}| |{\cal S}|  e^{- \alpha_1} + 
		|{\cal K}| |{\cal L}| |{\cal S}|  e^{- \alpha_2}, \label{eq:error-analysis-1} \\
\lefteqn{ \mathbb{E}_{{\cal C}_0 {\cal C}_1 {\cal C}_2} [P_{err}(f,\phi)] } \nonumber \\
	&\le& P_{UZ}({\cal T}_0^c) + |{\cal K}| e^{- \alpha_0} \label{eq:error-analysis-2}
\end{eqnarray}
and
\begin{eqnarray}
\mathbb{E}_{{\cal C}_0 {\cal C}_1 {\cal C}_2} [ D(f)] 
	&\le& \frac{1}{\theta |{\cal A}|^\theta} e^{\psi(\theta|P_{Z|X},P_{X|V},P_{V})} 
\nonumber \\
	&& +\frac{1}{\theta^\prime |{\cal L}|^{\theta^\prime}} e^{\psi(\theta^
\prime|P_{Z|V},P_{V|U},P_U)}
	\label{eq:error-analysis-3}
\end{eqnarray}
for $0 < \theta, \theta^\prime \le 1$,
where the functions $\psi(\theta|\cdot,\cdot,\cdot)$ is defined by
Eq.~(\ref{eq:definition-of-psi}).
\end{lemma}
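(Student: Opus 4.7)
For Bob's error probability \eqref{eq:error-analysis-1}, I would apply the standard threshold-decoding template. The error event for the true message $(k,\ell,s)$ decomposes into (a) the true triple $(U_k, V_{k\ell s}, Y)$ escaping $\mathcal{T}$, and (b) some competing triple $(U_{\hat k}, V_{\hat k \hat\ell \hat s}, Y)$ entering $\mathcal{T}$. Since $\mathcal{T} = \mathcal{T}_1 \cap (\mathcal{U}\times\mathcal{T}_2)$ and under the random-code expectation $(U_K, V_{KLS}, Y) \sim P_{UVY}$, event (a) contributes at most $P_{UVY}(\mathcal{T}_1^c) + P_{VY}(\mathcal{T}_2^c)$ by a union bound over the two threshold conditions defining $\mathcal{T}$. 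For (b) I would split competitors by whether $\hat k = k$: each competitor with $\hat k = k$ has $V_{k\hat\ell\hat s}$ drawn from $P_{V|U}(\cdot|u_k)$ conditionally independently of $Y$ given $u_k$, so the change-of-measure implicit in the definition of $\mathcal{T}_1$ yields an $e^{-\alpha_1}$ bound per competitor; each competitor with $\hat k \ne k$ has $(U_{\hat k}, V_{\hat k \hat\ell \hat s})$ marginally distributed as $P_{UV}$ and independent of $Y$, and the $\mathcal{T}_2$ threshold gives $e^{-\alpha_2}$ per competitor. Union bounds over the at most $|\mathcal{L}||\mathcal{S}|$ and $|\mathcal{K}||\mathcal{L}||\mathcal{S}|$ competitors in the two classes produce the remaining terms. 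Eve's bound \eqref{eq:error-analysis-2} follows the same template with the single layer $\mathcal{T}_0$: the true-codeword miss contributes $P_{UZ}(\mathcal{T}_0^c)$, and each of the $|\mathcal{K}|$ incorrect $U_{\hat k} \sim P_U$ independent of $Z$ enters $\mathcal{T}_0$ with probability at most $e^{-\alpha_0}$ by the analogous change of measure.

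For the security bound \eqref{eq:error-analysis-3}, I would reduce to the superposition resolvability of Lemma~\ref{theorem:resolvability} applied conditionally on $U$. The first step exploits independence of $K$ and $S$: by the chain rule $I(S;\tilde{Z}) \le I(S;\tilde{Z}|K)$. For each fixed $k$, I would upper-bound $I(S;\tilde{Z}|K=k)$ by $\mathbb{E}_S[D(P_{\tilde{Z}|K=k,S=s} \| P_{Z|U}(\cdot|u_k))]$, using the elementary fact that the true conditional $P_{\tilde{Z}|K=k}$ minimizes the $s$-averaged divergence against any $s$-independent reference. Conditioned on $U_k = u_k$, the sub-ensemble $\{v_{k\ell s}\}_\ell$ together with $\{x_{k\ell s a}\}_a$ is precisely the two-level superposition code of Lemma~\ref{theorem:resolvability}, with top layer drawn from $P_{V|U}(\cdot|u_k)$, bottom conditional $P_{X|V}$, simulating the target output $P_{Z|U}(\cdot|u_k)$ through $P_{Z|X}$. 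Applying the lemma yields a conditional bound involving $\psi(\theta|P_{Z|X},P_{X|V},P_{V|U=u_k})$ and $\psi(\theta'|P_{Z|V},P_{V|U=u_k})$. Finally, taking the $U_k \sim P_U$ expectation of the exponentials, linearity of expectation together with the definitions \eqref{eq:definition-of-psi} and \eqref{eq:definition-of-psi-2} gives $\mathbb{E}_U[e^{\psi(\theta|P_{Z|X},P_{X|V},P_{V|U=U})}] = e^{\psi(\theta|P_{Z|X},P_{X|V},P_V)}$ and $\mathbb{E}_U[e^{\psi(\theta'|P_{Z|V},P_{V|U=U})}] = e^{\psi(\theta'|P_{Z|V},P_{V|U},P_U)}$, producing exactly \eqref{eq:error-analysis-3}.

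The main obstacle will be the security step. The delicate points are (i) identifying the correct conditional application of Lemma~\ref{theorem:resolvability}, specifically that the right reference is $P_{Z|U}(\cdot|u_k)$ rather than $P_Z$, and (ii) verifying that $U$-averaging collapses the two conditional $\psi$'s into the three-argument forms appearing in \eqref{eq:error-analysis-3}, without producing a third resolvability term for the outer $|\mathcal{K}|$ layer. Conceptually, the outer $K$ layer need not be resolved at all; it is eliminated by the preliminary reduction $I(S;\tilde{Z}) \le I(S;\tilde{Z}|K)$, so the three-layer code contributes only two resolvability terms to the divergence bound.
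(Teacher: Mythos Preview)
Your proposal is correct and follows essentially the same route as the paper's proof: the threshold-decoding union bound with the same same-cloud/different-cloud split for \eqref{eq:error-analysis-1} and \eqref{eq:error-analysis-2}, and for \eqref{eq:error-analysis-3} the reduction $I(S;\tilde Z)\le I(S;\tilde Z|K)$ (the paper phrases this as monotonicity of the divergence $D(P_{S\tilde Z}\|P_S\times P_{\tilde Z})\le D(P_{KS\tilde Z}\|P_S\times P_{K\tilde Z})$), followed by replacing $P_{\tilde Z|K}(\cdot|k)$ with the reference $P_{Z|U}(\cdot|u_k)$, a conditional application of Lemma~\ref{theorem:resolvability} with $P_{V|U}(\cdot|u_k)$ in place of $P_V$, and finally the $P_U$-average that collapses the conditional exponents to $\psi(\theta|P_{Z|X},P_{X|V},P_V)$ and $\psi(\theta'|P_{Z|V},P_{V|U},P_U)$.
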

\begin{proof}
See Appendix \ref{proof-lemma:error-analysis}.
\end{proof}

We apply Lemma \ref{lemma:error-analysis} for asymptotic case.
For $(R_d, R_0,R_1,R_d) \in {\cal R}^{(in)}$ and arbitrary small $\delta >0$,
we set $|{\cal K}_n| = \lfloor e^{n (R_0 - 2 \delta)} \rfloor$,
$|{\cal L}_n| = \lfloor e^{n(R_1 + 2 \delta)} \rfloor$,
$|{\cal S}_n| = \lfloor e^{n (R_s - 4 \delta)} \rfloor$,
$|{\cal A}_n| = \lfloor e^{n(R_d + 2 \delta)} \rfloor$,
$\alpha_0 = I(U; Z) - \delta$,
$\alpha_1 = I(V;Y|U) - \delta$,
$\alpha_2 = I(V; Y) - \delta$.
Then, 
\begin{eqnarray*}
|{\cal L}_n| |{\cal S}_n| e^{-  \alpha_1 n} 
	&\le& e^{- n( I(V;Y|U) - R_1 - R_s  + \delta) }, \\
|{\cal K}_n| |{\cal L}_n| |{\cal S}_n| e^{- \alpha_2 n} &\le& e^{- n( I(V;Y) - R_0 - R_1 - R_s  + 
3 \delta)}, \\
|{\cal K}_n| e^{- \alpha_0 n} &\le& e^{- n( I(U;Z) - R_0 + \delta)}
\end{eqnarray*}
converge to $0$ asymptotically. Furthermore, by  the law of large numbers,
$P_{UVY}^n({\cal T}_{1,n}^c)$, $P_{VY}^n({\cal T}_{2,n}^c)$,
and $P_{UZ}^n({\cal T}_{0,n}^c)$ also converge to $0$ asymptotically.

Since $\psi^\prime(0|P_{Z|X},P_{X|V},P_V) = I(X;Z|V)$, there exists
$\theta_0 > 0$ such that
\begin{eqnarray*}
\frac{\psi(\theta_0|P_{Z|X},P_{X|V},P_V)}{\theta_0} \le I(X;Z|V) + \delta \le R_d + \delta,
\end{eqnarray*}
which implies
\begin{eqnarray*}
- \frac{\theta_0}{n} \log |{\cal A}_n| + \psi(\theta_0|P_{Z|X},P_{X|V},P_V) \le - \delta.
\end{eqnarray*}
Thus, 
\begin{eqnarray*}
\frac{1}{\theta_0 |{\cal A}_n|^{\theta_0}} e^{n \psi(\theta_0|P_{Z|X},P_{X|V},P_V)}
\end{eqnarray*}
exponentially converges to $0$ asymptotically.
Similarly, since
$\psi^\prime(0|P_{Z|V},P_{V|U},P_{U}) = I(V;Z|U)$, there exists 
$\theta_0^\prime > 0$ such that
\begin{eqnarray*}
\frac{\psi(\theta_0^\prime|P_{X|V},P_{V|U},P_U)}{\theta_0^\prime}
	\le I(V;Z|U) + \delta \le R_1 + \delta,
\end{eqnarray*}
which implies 
\begin{eqnarray*}
- \frac{\theta_0^\prime}{n} \log |{\cal L}_n|  + 
	\psi(\theta_0^\prime|P_{X|V},P_{V|U},P_U) \le - \delta.
\end{eqnarray*}
Thus, 
\begin{eqnarray*}
\frac{1}{\theta_0^\prime |{\cal L}_n|^{\theta^\prime_0}} e^{n \psi(\theta_0^
\prime|P_{X|V},P_{V|U},P_U)}
\end{eqnarray*}
exponentially converges $0$ asymptotically.
This completes the proof of Lemma \ref{lemma:inner-bound}. \qed

\subsection{Proof of Converse Part of Theorem \ref{theorem:main}}

Suppose that $(R_d, R_0, R_1, R_s) \in {\cal R}$.
Then, for arbitrary $\gamma > 0$, there exists $n$ such that
\begin{eqnarray*}
n(R_0 - \gamma) &\le& \log |{\cal K}_n|, \\
n(R_0 + R_1 + R_s - \gamma) &\le& \log |{\cal K}_n| |{\cal L}_n| |{\cal S}_n|, \\
n(R_s - \gamma) &\le& \log |{\cal S}_n|, \\
n(R_1 + R_d + \gamma) &\ge& \log |{\cal L}_n| |{\cal A}_n|, \\
n(R_d + \gamma) &\ge& \log |{\cal A}_n|. 
\end{eqnarray*}
By combining these inequalities with the following
Lemma \ref{lemma:converse-fano} and Lemma \ref{lemma:converse-single-letter}, we have 
the converse part
of the theorem. The statement about the range sizes of $U$ and $V$
can be proved exactly in the same manner as \cite[Theorem 17.13]{csiszar-korner:11}.
It should be noted that 
Eqs.~(\ref{theorem:main-condition1})--(\ref{theorem:main-condition4})
are derived in the same manner as \cite[Theorem 17.13]{csiszar-korner:11} and the 
construction of
the auxiliary random variables are also the same.
Eqs.~(\ref{theorem:main-condition3}) and (\ref{theorem:main-condition5}) are additionally 
proved 
in this paper by using the fact that the encoder is deterministic given the dummy 
randomness.

\begin{lemma}
\label{lemma:converse-fano}
There exists $\varepsilon_n \to 0$ such that
\begin{eqnarray*}
\lefteqn{\log |{\cal K}_n| } \\ &\le& I(K_n; Y^n) + n \varepsilon_n, \\
\lefteqn{\log |{\cal K}_n| } \\ &\le& I(K_n; Z^n) + n \varepsilon_n, \\
\lefteqn{\log |{\cal K}_n| |{\cal L}_n| |{\cal S}_n| } \\
	&\le& I(K_n,L_n,S_n; Y^n) + n \varepsilon_n, \\
\lefteqn{ \log |{\cal K}_n| |{\cal L}_n| |{\cal S}_n| } \\
	&\le& I(L_n,S_n; Y^n|K_n) + I(K_n; Z^n) + 2 n \varepsilon_n, \\
\lefteqn{ \log |{\cal S}_n| } \\ 
	&\le& I(L_n,S_n; Y^n|K_n) - I(L_n,S_n;Z^n|K_n) + 4 n \varepsilon_n, \\
\lefteqn{ \log |{\cal L}_n| |{\cal A}_n| } \\ &\ge& 
	I(X^n; Z^n|K_n) - 2 n \varepsilon_n, \\
\lefteqn{ \log |{\cal A}_n| } \\
	&\ge& I(X^n; Z^n| K_n, L_n,S_n). 
\end{eqnarray*}
\end{lemma}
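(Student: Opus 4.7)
The plan is to treat this as a collection of Fano-type bounds, using five ingredients uniformly: (i) Bob's Fano inequality, which (from $P_{err}(f_n, g_n) \to 0$) gives $H(K_n, L_n, S_n \mid Y^n) \le n\varepsilon_n$; (ii) Eve's Fano inequality, which gives $H(K_n \mid Z^n) \le n\varepsilon_n$; (iii) strong security, which (since $\tilde Z^n$ has the same joint distribution with $S_n$ as $Z^n$ does) gives $I(S_n; Z^n) = D(f_n) \le n\varepsilon_n$; (iv) the uniformity and independence of $(K_n, L_n, S_n, A_n)$; and (v) the Markov chain $(K_n, L_n, S_n, A_n) \markov X^n \markov (Y^n, Z^n)$, together with the fact that $X^n$ is a deterministic function of $(K_n, L_n, S_n, A_n)$. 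I take $\varepsilon_n$ to be the maximum of the quantities produced by (i)--(iii), divided by $n$.

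The first three bounds are immediate from (iv): $\log |\mathcal K_n| = H(K_n) = I(K_n; Y^n) + H(K_n|Y^n)$ and analogously for $Z^n$ and for $(K_n, L_n, S_n)$. The fourth bound splits $\log |\mathcal K_n||\mathcal L_n||\mathcal S_n| = H(K_n) + H(L_n, S_n | K_n)$, bounding the first summand by Eve's Fano and the second by Bob's Fano applied after conditioning on $K_n$. The fifth (equivocation) bound follows the standard Csisz\'ar--K\"orner chain: strong security first upgrades $H(S_n)$ to $H(S_n|Z^n) + n\varepsilon_n$; Eve's Fano then gives $H(S_n|Z^n) \le H(S_n|Z^n, K_n) + n\varepsilon_n$; enlarging to $H(L_n, S_n | Z^n, K_n) = H(L_n, S_n|K_n) - I(L_n, S_n; Z^n|K_n)$ and applying Bob's Fano to $H(L_n, S_n|K_n)$ produces the desired $I(L_n,S_n; Y^n|K_n) - I(L_n,S_n; Z^n|K_n)$ term.

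For the seventh bound, the deterministic-encoder hypothesis in (v) gives $H(X^n|K_n, L_n, S_n) \le H(A_n|K_n, L_n, S_n) = \log|\mathcal A_n|$, and $I(X^n;Z^n|K_n,L_n,S_n) \le H(X^n|K_n,L_n,S_n)$ closes it out. The main obstacle is the sixth bound, because the naive estimate $H(X^n|K_n) \le H(L_n, S_n, A_n|K_n)$ produces an unwanted $\log|\mathcal S_n|$ on the right-hand side. My proposal is to first split along $S_n$ via the identity
\begin{equation*}
I(X^n; Z^n | K_n) = I(S_n; Z^n | K_n) + I(X^n; Z^n | K_n, S_n),
\end{equation*}
which is valid because $Z^n$ is conditionally independent of $S_n$ given $X^n$ by (v). Strong security together with Eve's Fano bounds the first term by $2 n \varepsilon_n$, via
\begin{equation*}
I(S_n; Z^n|K_n) = I(S_n; Z^n) + I(S_n; K_n | Z^n) \le n\varepsilon_n + H(K_n | Z^n).
\end{equation*}
For the second term, the deterministic structure gives $H(X^n | K_n, S_n) \le H(L_n, A_n | K_n, S_n) = \log|\mathcal L_n| + \log|\mathcal A_n|$, so $I(X^n; Z^n | K_n, S_n) \le \log|\mathcal L_n| + \log|\mathcal A_n|$, and adding yields (6). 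The delicate point is therefore this splitting trick, which cleanly removes $\log|\mathcal S_n|$ by paying the price of strong security plus Eve's Fano.
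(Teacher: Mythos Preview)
Your proposal is correct and follows essentially the same approach as the paper: the first four bounds and the seventh are identical, your ``splitting trick'' for the sixth bound is exactly the paper's computation (the paper writes it as $\log|\mathcal L_n||\mathcal A_n| \ge H(X^n|K_n,S_n) \ge I(X^n;Z^n|K_n,S_n) = I(X^n,S_n;Z^n|K_n) - I(S_n;Z^n|K_n) \ge I(X^n;Z^n|K_n) - 2n\varepsilon_n$, using the same Markov identity and the same bound $I(S_n;Z^n|K_n)\le 2n\varepsilon_n$ you derive), and your chain for the fifth bound is a minor variant of the paper's that even saves one $n\varepsilon_n$.
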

\begin{proof}
By using Fano's inequality, we have
\begin{eqnarray*}
\log |{\cal K}_n| 
	&=& H(K_n) \\
	&=& I(K_n; Y^n) + H(K_n|Y^n) \\
	&\le& I(K_n; Y^n) + n \varepsilon_n, \\
\end{eqnarray*}
and
\begin{eqnarray*}
\log |{\cal K}_n| \le I(K_n; Z^n) + n \varepsilon_n.
\end{eqnarray*}
By using Fano's inequality, we also have
\begin{eqnarray*}
\log |{\cal K}_n| |{\cal L}_n| |{\cal S}_n| 
	&=& H(K_n,L_n,S_n) \\
	&\le& I(K_n,L_n,S_n; Y^n) + n \varepsilon_n
\end{eqnarray*}
and
\begin{eqnarray*}
\lefteqn{ \log |{\cal K}_n| |{\cal L}_n| |{\cal S}_n|  } \\
	&=& H(L_n, S_n|K_n) + H(K_n) \\
	&=& I(L_n, S_n; Y^n| K_n) + I(K_n; Z^n) + 2 n \varepsilon_n.
\end{eqnarray*}
By using the security condition and Fano's inequality, we have
\begin{eqnarray}
\lefteqn{ I(S_n; Z^n|K_n) } \nonumber \\
	&=& I(S_n,K_n ; Z^n) - I(K_n;Z^n) \nonumber \\
	&=& I(S_n; Z^n) + I(K_n;Z^n|S_n) - I(K_n;Z^n) \nonumber \\
	&\le& I(S_n;Z^n) + H(K_n|Z^n) \nonumber \\
	&\le& 2 n \varepsilon_n.
	\label{eq:szk-bound}
\end{eqnarray}
By using Fano's inequality and
by using Eq.~(\ref{eq:szk-bound}), we have
\begin{eqnarray*}
\lefteqn{ \log |{\cal S}_n| } \\
	&=& H(S_n|K_n) \\
	&\le& I(S_n; Y^n|K_n) + n \varepsilon_n \\
	&=& I(L_n,S_n; Y^n|K_n) - I(L_n;Y^n|S_n,K_n) + n \varepsilon_n \\
	&\le& I(L_n,S_n; Y^n|K_n) - H(L_n|S_n,K_n) + 2 n \varepsilon_n \\
	&\le& I(L_n,S_n; Y^n|K_n) - I(S_n;Z^n|K_n) \\
	&& - H(L_n|S_n,K_n) + 4 n \varepsilon_n \\ 
	&\le& I(L_n,S_n; Y^n|K_n) - I(L_n,S_n;Z^n|K_n) + 4 n \varepsilon_n.  
\end{eqnarray*}

By noting that $f_n$ is a deterministic function
and by using Eq.~(\ref{eq:szk-bound}), we have
\begin{eqnarray*}
\lefteqn{ \log |{\cal L}_n| |{\cal A}_n| } \\
	&\ge& H(X^n|K_n,S_n) \\
	&\ge& I(X^n; Z^n| K_n, S_n) \\
	&=& I(X^n, S_n ; Z^n|K_n) - I(S_n; Z^n|K_n) \\
	&\ge& I(X^n; Z^n|K_n) - 2 n \varepsilon_n.
\end{eqnarray*}
Finally, by noting that $f_n$ is a deterministic function, we have
\begin{eqnarray*}
\log |{\cal A}_n| 
	&\ge& H(X^n|K_n,L_n,S_n) \\
	&\ge& I(X^n; Z^n|K_n, L_n, S_n).
\end{eqnarray*}
\end{proof}

\begin{lemma}
\label{lemma:converse-single-letter}
For fixed $n$, let $T$ be the random variable that 
is uniformly distributed over $\{1,\ldots,n\}$
and independent of the other random variables.
Define the following auxiliary random variables
\begin{eqnarray*}
U_t &=& (K_n,Y_1^{t-1},Z_{t+1}^n), \\
V_t &=& (L_n,S_n,U_t), \\
U &=& (U_T,T), \\
V &=& (V_T, T).
\end{eqnarray*}
Then, we have
\begin{eqnarray}
\lefteqn{ I(K_n;Y^n) } \nonumber \\
	&\le& n I(U; Y), \label{single-letter-1} \\
\lefteqn{ I(K_n;Z^n) } \nonumber \\
	&\le& n I(U; Z), \label{single-letter-2} \\
\lefteqn{ I(K_n,L_n,S_n;Y^n) } \nonumber \\ 
	&\le& n[ I(V;Y|U) + I(U;Y)], \label{single-letter-3} \\
\lefteqn{ I(L_n,S_n;Y^n|K_n) + I(K_n; Z^n) } \nonumber \\
	&\le& n[I(V;Y|U) + I(U;Z)], \label{single-letter-4} \\
\lefteqn{ I(L_n,S_n;Y^n|K_n) - I(L_n,S_n;Z^n|K_n) } \nonumber \\
	&\le& n[I(V;Y|U) - I(V;Z|U)], \label{single-letter-6} \\
\lefteqn{ I(X^n;Z^n|K_n) } \nonumber \\
	&\ge& n I(X;Z|U), \label{single-letter-5} \\
\lefteqn{ I(X^n;Z^n|K_n,L_n,S_n) } \nonumber \\
	&\ge& n I(X;Z|V) \label{single-letter-7}
\end{eqnarray}
and 
\begin{eqnarray}
\label{eq:markov-proof}
U \markov V \markov X \markov (Y,Z).
\end{eqnarray}
\end{lemma}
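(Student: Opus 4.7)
The plan is to apply the Csisz\'ar--K\"orner single-letterization technique, with the Csisz\'ar sum identity handling the cross-coupling between $Y^n$ and $Z^n$. Two features distinguish the argument from the standard BCC converse: the lower-bound inequalities (\ref{single-letter-5}) and (\ref{single-letter-7}), which crucially exploit that $f_n$ is deterministic; and the presence of the $\min$ in the rate condition, which arises by bounding $R_0 + R_1 + R_s$ in two different ways in (\ref{single-letter-3}) and (\ref{single-letter-4}). The Markov chain (\ref{eq:markov-proof}) is easy: $U_t$ is a sub-vector of $V_t$, and since $f_n$ is deterministic, $X_t$ is a function of $(K_n, L_n, S_n, A_n)$; combined with the memoryless channel, this gives $V_t \markov X_t \markov (Y_t, Z_t)$ per letter, and the uniform time-sharing variable $T$ (independent of the other random variables) preserves the chain globally.

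For the direct upper bounds (\ref{single-letter-1})--(\ref{single-letter-3}), the chain rule and enlargement of the conditioning on the ``input'' side of the mutual information suffice. For instance,
\[
I(K_n;Y^n) = \sum_t I(K_n;Y_t \mid Y_1^{t-1}) \le \sum_t I(U_t;Y_t) \le n I(U;Y),
\]
with the last step following because $I(U;Y) = I(T;Y_T) + I(U_T;Y_T\mid T)$. Inequality (\ref{single-letter-3}) uses the same idea after noting that $V_t$ already contains the full candidate conditioning $(K_n, L_n, S_n, Y_1^{t-1}, Z_{t+1}^n)$. The mixed inequalities (\ref{single-letter-4}) and (\ref{single-letter-6}) require the Csisz\'ar sum identity $\sum_t I(Z_{t+1}^n; Y_t \mid K_n, Y_1^{t-1}) = \sum_t I(Y_1^{t-1}; Z_t \mid K_n, Z_{t+1}^n)$, applied both with and without $(L_n, S_n)$ in the conditioning. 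In (\ref{single-letter-6}), the two Csisz\'ar correction terms cancel exactly between the $Y^n$- and $Z^n$-mutual informations, leaving the clean single-letter difference $\sum_t [I(V_t;Y_t \mid U_t) - I(V_t;Z_t \mid U_t)]$; in (\ref{single-letter-4}), the rearrangement combines with the chain rule to yield $\sum_t I(K_n,Y_1^{t-1}; Z_t \mid Z_{t+1}^n) \le \sum_t I(U_t; Z_t)$ for the residual term.

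For the lower bounds (\ref{single-letter-5}) and (\ref{single-letter-7}), the plan is to exploit determinism of $f_n$ together with the memoryless property to write $H(Z^n \mid K_n, X^n) = H(Z^n \mid X^n) = \sum_t H(Z_t \mid X_t)$, and to expand $H(Z^n \mid K_n) = \sum_t H(Z_t \mid K_n, Z_{t+1}^n)$ by the backward chain rule. Dropping conditioning gives $H(Z_t \mid K_n, Z_{t+1}^n) \ge H(Z_t \mid K_n, Y_1^{t-1}, Z_{t+1}^n) = H(Z_t \mid U_t)$, so $I(X^n; Z^n \mid K_n) \ge \sum_t I(X_t; Z_t \mid U_t) = n I(X; Z \mid U)$, and the analogous argument with the extra conditioning on $(L_n, S_n)$ produces (\ref{single-letter-7}). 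The main obstacle is the careful bookkeeping for the Csisz\'ar sum identity cancellations in (\ref{single-letter-4}) and (\ref{single-letter-6}); once the correct ``dual'' form of the identity is set up, everything else reduces to routine chain-rule manipulation and the introduction of the time-sharing variable.
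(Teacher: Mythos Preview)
Your proposal is correct and follows essentially the same route as the paper: chain rule plus enlargement for (\ref{single-letter-1})--(\ref{single-letter-3}), Csisz\'ar sum identity for (\ref{single-letter-4}) and (\ref{single-letter-6}), and the memoryless Markov structure for the lower bounds (\ref{single-letter-5}) and (\ref{single-letter-7}). One clarification: neither the Markov chain $V_t \markov X_t \markov (Y_t,Z_t)$ nor the step $H(Z^n\mid K_n,X^n)=\sum_t H(Z_t\mid X_t)$ actually uses the determinism of $f_n$---both follow purely from the memoryless channel (everything is conditionally independent of $Z_t$ given $X_t$); the determinism is what drives Lemma~\ref{lemma:converse-fano}, not this single-letterization lemma.
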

\begin{proof}
Since the proof of Eq.~(\ref{eq:markov-proof}) is well known \cite{csiszar-korner:11}, we 
only
prove the other inequalities.
\paragraph{Proof of Eq.~(\ref{single-letter-1})}
\begin{eqnarray*}
\lefteqn{ I(K_n; Y^n) } \\
	&=& \sum_{t=1}^n I(K_n; Y_t|Y_1^{t-1}) \\
	&\le& \sum_{t=1}^n I(K_n,Y_1^{t-1},Z_{t+1}^n; Y_t) \\
	&=& \sum_{t=1}^n I(U_t; Y_t) \\
	&=& n I(U_T; Y_T|T) \\
	&=& n I(U_T,T; Y_T) \\
	&=& n I(U; Y).
\end{eqnarray*}

\paragraph{Proof of Eq.~(\ref{single-letter-2})}
\begin{eqnarray*}
\lefteqn{ I(K_n; Z^n) } \\
	&=& \sum_{t=1}^n I(K_n; Z_t|Z_{t+1}^n) \\
	&\le& \sum_{t=1}^n I(K_n,Y_1^{t-1},Z_{t+1}^n; Z_t) \\
	&=& \sum_{t=1}^n I(U_t; Z_t) \\
	&=& n I(U_T; Z_T|T) \\
	&=& n I(U_T,T; Z_T) \\
	&=& n I(U; Z).
\end{eqnarray*}

\paragraph{Proof of Eq.~(\ref{single-letter-3})}
\begin{eqnarray*}
\lefteqn{ I(K_n,L_n,S_n; Y^n) } \\
	&=& \sum_{t=1}^n I(K_n,L_n,S_n;Y_t| Y_1^{t-1}) \\
	&\le& \sum_{t=1}^n I(K_n,L_n,S_n,Y_1^{t-1},Z_{t+1}^n; Y_t) \\
	&=& \sum_{t=1}^n I(U_t,V_t; Y_t) \\
	&=& n I(U_T,V_T; Y_T|T) \\
	&=& n I(U,V; Y) \\
	&=& n[ I(V;Y|U) + I(U;Y)].
\end{eqnarray*}

\paragraph{Proof of Eq.~(\ref{single-letter-4})}
\begin{eqnarray*}
\lefteqn{ I(L_n,S_n; Y^n|K_n) + I(K_n; Z^n) } \\
	&=& \sum_{t=1}^n [ I(L_n,S_n; Y_t|K_n, Y_1^{t-1}) + I(K_n; Z_t|Z_{t+1}^n) ] \\
	&\le& \sum_{t=1}^n[ I(L_n,S_n; Y_t|K_n,Y_1^{t-1},Z_{t+1}^n)  \\
	&& + I(Z_{t+1}^n; Y_t|K_n,Y_1^{t-1}) + I(K_n; Z_T|Z_{t+1}^n) ] \\
	&\stackrel{(a)}{=}& \sum_{t=1}^n [ I(L_n,S_n;Y_t|K_n,Y_t^{t-1},Z_{t+1}^n) \\
	&& + I(Y_1^{t-1}; Z_t|K_n,Z_{t+1}^n) + I(K_n;Z_t|Z_{t+1}^n) ]\\
	&\le& \sum_{t=1}^n [I(L_n,S_n; Y_t|K_n,Y_1^{t-1},Z_{t+1}^n) \\
	&& + I(K_n,Y_1^{t-1},Z_{t+1}^n; Z_t) ] \\
	&=& \sum_{t=1}^n [ I(V_t; Y_t|U_t) + I(U_t; Z_t) ] \\
	&=& n [I(V_T;Y_T|U_T,T) + I(U_T; Z_T|T) ] \\
	&=& n [ I(V; Y|U) + I(U; Z)],
\end{eqnarray*}
where we used Csisz\'ar's sum identity \cite{elgamal-kim-book} in (a).

\paragraph{Proof of Eq.~(\ref{single-letter-6})}
\begin{eqnarray*}
\lefteqn{ I(L_n,S_n; Y^n|K_n) - I(L_n,S_n; Z^n|K_n) } \\
	&=& \sum_{t=1}^n [ I(L_n,S_n; Y_t|K_n,Y_1^{t-1} ) - I(L_n,S_n; Z_t|K_n,Z_{t+1}^n) ] \\
	&\stackrel{(a)}{=}& \sum_{t=1}^n [ I(L_n,S_n; Y_t|K_n,Y_t^{t-1} ) \\
	&& + I(Z_{t+1}^n; Y_t|K_n, L_n,S_n,Y_1^{t-1}) \\
	&& - I(Y_1^{t-1}; Z_t|K_n,L_n,S_n,Z_{t+1}^n) \\
	&& - I(L_n,S_n; Z_t|K_n,Z_{t+1}^n) ] \\
	&=& \sum_{t=1}^n [ I(L_n,S_n,Z_{t+1}^n; Y_t|K_n,Y_1^{t-1})  \\
	&& - I(L_n,S_n,Y_1^{t-1}; Z_t|K_n,Z_{t+1}^n) ] \\
	&\stackrel{(b)}{=}& \sum_{t=1}^n[ I(L_n,S_n; Y_t|K_n,Y_1^{t-1},Z_{t+1}^n) \\
	&& + I(Z_{t+1}^n; Y_t|K_n,Y_1^{t-1}) - I(Y_1^{t-1}; Z_t|K_n,Z_{t+1}^n) \\
	&& - I(L_n,S_n; Z_t|K_n,Y_1^{t-1},Z_{t+1}^n) ] \\
	&=& \sum_{t=1}^n[ I(L_n,S_n;Y_t|K_n,Y_1^{t-1},Z_{t+1}^n) \\
	&& - I(L_n,S_n;Z_t|K_n,Y_1^{t-1},Z_{t+1}^n)] \\
	&=& \sum_{t=1}^n [ I(V_t; Y_t|U_t) - I(V_t; Z_t|U_t) ] \\
	&=& n [ I(V_T; Y_T|U_T,T) - I(V_T; Z_T|U_T,T) ]\\
	&=& n [ I(V; Y|U) - I(V; Z|U) ],
\end{eqnarray*}
where (a) and (b) follow from Csisz\'ar's sum identity \cite{elgamal-kim-book}.

\paragraph{Proof of Eq.~(\ref{single-letter-5})}
\begin{eqnarray*}
\lefteqn{ I(X^n;Z^n|K_n) } \\
	&=& \sum_{t=1}^n[ H(Z_t|K_n,Z_{t+1}^n) - H(Z_t|K_n,X^n,Z_{t+1}^n) ] \\
	&\stackrel{(a)}{\ge}& \sum_{t=1}^n[ H(Z_t|K_n,Y_1^{t-1},Z_{t+1}^n) - 
H(Z_t|K_n,X_t,Y_1^{t-1},Z_{t+1}^n) ] \\
	&=& \sum_{t=1}^n I(X_t; Z_t|K_n,Y_1^{t-1},Z_{t+1}^n) \\
	&=& \sum_{t=1}^n I(X_t ; Z_t|U_t) \\
	&=& n I(X_T; Z_T|U_T,T) \\
	&=& n I(X; Z|U),
\end{eqnarray*}
where (a) follows from the fact that $(K_n,X_1^{t-1},X_{t+1}^n,Y_1^{t-1},Z_{t+1})$, $X_t$, 
and 
$Z_t$ form Markov chain.

\paragraph{Proof of Eq.~(\ref{single-letter-7})}
\begin{eqnarray*}
\lefteqn{ I(X^n; Z^n|K_n,L_n,S_n) } \\
	&=& \sum_{t=1}^n[ H(Z_t|K_n,L_n,S_n,Z_{t+1}^n) \\
	&& - H(Z_t|K_n,L_n,S_n,X^n,Z_{t+1}^n) ] \\
	&\stackrel{(a)}{\ge}& \sum_{t=1}^n[ H(Z_t|K_n,L_n,S_n,Y_1^{t-1},Z_{t+1}^n)\\
	&&	- H(Z_t|K_n,L_n,S_n,X_t,Y_1^{t-1},Z_{t+1}^n) ] \\
	&=& \sum_{t=1}^n I(X_t; Z_t|K_n,L_n,S_n,Y_1^{t-1},Z_{t+1}^n) \\
	&=& \sum_{t=1}^n I(X_t; Z_t|V_t) \\
	&=& n I(X_T; Z_T|V_T,T) \\
	&=& n I(X; Z| V),
\end{eqnarray*}
where (a) follows from the fact that
$(K_n$, $L_n$, $S_n$, $X_1^{t-1}$, $X_{t+1}^n$, $Y_1^{t-1}$, $Z_{t+1}^n)$,
$X_t$, and $Z_t$ form Markov chain.
\end{proof}

\section{Conclusion}
\label{section:conclusion}

In this paper, we investigated the trade-off between the
rate of the random number, the rates of  common, private,
and confidential messages. 

As a by-product of our result,
Lemma \ref{theorem:resolvability} can be also applied 
to the three receiver wire-tap channel, and
the lower bound of secrecy capacity obtained
in \cite[Corollary 1]{chia:12} with strong security can be proved.

\section*{Acknowledgment}

This research was initiated by a discussion with Prof.~Ryutaroh Matsumoto
about the deterministic encoding result in \cite{oohama:10}.
The authors would like to thank him
for bringing the authors' attention to the randomness constrained
stochastic encoding problem.
This research is partly supported by
Grand-in-Aid for Young Scientists(B):2376033700,
Grant-in-Aid for Scientific Research(B):2336017202,
and Grant-in-Aid for Scientific Research(A):2324607101.

\appendix

\subsection{Proof of Lemma \ref{theorem:resolvability}}
\label{proof-of-theorem:resolvability}

For simplicity of notation, we only prove the statement for $n=1$,
and the subscript $n$ is omitted in the proof.
The statement for $n \ge 2$ can be proved by regarding the
$n$th product distribution as one distribution and by noting that
\begin{eqnarray*}
\psi(\theta| P_{Z|X}^n,P_{X|V}^n,P_V^n) = n \psi(\theta|P_{Z|X},P_{X|V},P_V)
\end{eqnarray*}
and
\begin{eqnarray*}
\psi(\theta^\prime| P_{Z|V}^n, P_V^n) = n \psi(\theta^\prime| P_{Z|V},P_V)
\end{eqnarray*}
hold.

We first note the following observations.
By taking average over the randomly generated codes ${\cal C}_1$ and
${\cal C}_2$, we have
\begin{eqnarray*}
\lefteqn{
\mathbb{E}_{{\cal C}_1 {\cal C}_2} \left[ P_{\tilde{V}}(v) P_{\tilde{X}|\tilde{V}}(x|v) \right]
} \\
&=& \mathbb{E}_{{\cal C}_1 {\cal C}_2} \left[ \sum_{i \in {\cal M}_2} \sum_{j \in {\cal M}_1} 
\frac{1}{|{\cal M}_1| |{\cal  M}_2|} 
	\bol{1}[v_i = v, x_{ij} = x] \right] \\
&=& P_{VX}(v,x) 
\end{eqnarray*}
and
\begin{eqnarray}
\mathbb{E}_{{\cal C}_1 {\cal C}_2} \left[ P_{\tilde{V}}(v) \right]
 &=& \mathbb{E}_{{\cal C}_2} \left[ P_{\tilde{V}}(v)  \right] \nonumber \\
 &=& \mathbb{E}_{{\cal C}_2 }\left[ \sum_{i \in {\cal M}_2} \frac{1}{|{\cal M}_2|} \bol{1}[v_i 
= v] \right] \nonumber \\
 &=& P_V(v).
 \label{eq:average-c2-u}
\end{eqnarray}
Furthermore, 
for fixed ${\cal C}_2$, by taking the average over the randomly generated code ${\cal C}_1$, 
we have
\begin{eqnarray}
\lefteqn{ \mathbb{E}_{{\cal C}_1} \left[ P_{\tilde{X}|\tilde{V}}(x|v) \right] } \nonumber \\
&=& \mathbb{E}_{{\cal C}_1}\left[ \frac{\sum_{i \in {\cal M}_2} \sum_{j \in {\cal M}_1} 
\frac{1}{|{\cal M}_1| |{\cal M}_2|}
  \bol{1}[v_i = v,x_{ij} = x]}{P_{\tilde{V}}(v)} \right] \nonumber \\
&=& \frac{\sum_{i \in {\cal M}_2} \sum_{j \in {\cal M}_1} \frac{1}{|{\cal M}_1| |{\cal M}_2|} 
\bol{1}[v_i = v] P_{X|V}(x|v)}{P_{\tilde{V}(v)}} \nonumber \\
&=& P_{X|V}(x|v) 
\label{eq:average-c1-xu}
\end{eqnarray}

Let $P_{Z^\prime}$ be the output distribution when the input distribution is
$P_{\tilde{V}}(v)P_{X|V}(x|v)$. Then, from Eq.~(\ref{eq:average-c1-xu}), we have
\begin{eqnarray*}
\mathbb{E}_{{\cal C}_1} \left[ P_{\tilde{Z}}(z) \right] = P_{Z^\prime}(z)
\end{eqnarray*}
for every $z \in {\cal Z}$. Thus, we have
\begin{eqnarray}
\lefteqn{
\mathbb{E}_{{\cal C}_1 {\cal C}_2} \left[ D(P_{\tilde{Z}} \| P_Z) \right] } \nonumber \\
&=& \mathbb{E}_{{\cal C}_1 {\cal C}_2} \left[
   \sum_z P_{\tilde{Z}}(z) \log \frac{P_{\tilde{Z}}(z)}{P_Z(z)} \right] \nonumber \\
&=& \mathbb{E}_{{\cal C}_2} \left[
	\mathbb{E}_{{\cal C}_1}\left[ \sum_z P_{\tilde{Z}}(z) \log \frac{P_{\tilde{Z}}(z)}{P_{Z^
\prime}(z)} \right] \right. \nonumber \\
&&	~~~~+ \left. \mathbb{E}_{{\cal C}_1}\left[ \sum_z P_{\tilde{Z}}(z) \log \frac{P_{Z^
\prime}(z)}{P_Z(z)} \right]
\right] \nonumber \\
&=& \mathbb{E}_{{\cal C}_1 {\cal C}_2} \left[ \sum_z P_{\tilde{Z}}(z) \log \frac{P_{\tilde{Z}}
(z)}{P_{Z^\prime}(z)} \right] \nonumber \\
&&	~~~~+ \mathbb{E}_{{\cal C}_2}\left[ \sum_z P_{Z^\prime}(z) \log \frac{P_{Z^\prime}
(z)}{P_Z(z)} \right] \nonumber \\
&=& \mathbb{E}_{{\cal C}_1 {\cal C}_2}\left[ D(P_{\tilde{Z}} \| P_{Z^\prime}) \right]
	+ \mathbb{E}_{{\cal C}_2}\left[ D(P_{Z^\prime} \| P_Z) \right].
	\label{eq:proof-divergence-bound-0}
\end{eqnarray}

We bound each term of Eq.~(\ref{eq:proof-divergence-bound-0})
by using Proposition \ref{proposition:single-resolvability}.
By the monotonicity of the divergence, we have
\begin{eqnarray*}
D(P_{\tilde{Z}} \| P_{Z^\prime})
&\le& D(P_{I \tilde{Z}} \| P_{I Z^\prime} ) \\
&=& \sum_{i \in {\cal M}_2} \frac{1}{|{\cal M}_2|} D(P_{\tilde{Z}|I}(\cdot|i) \| P_{Z|V}
(\cdot|v_i)),
\end{eqnarray*}
where 
\begin{eqnarray*}
P_{I\tilde{Z}}(i,z) = \frac{1}{|{\cal M}_2|} \sum_{j \in {\cal M}_1} \frac{1}{|{\cal M}_1|} P_{Z|X}
(z|x_{ij})
\end{eqnarray*}
and 
\begin{eqnarray*}
P_{I Z^\prime}(i,z) = \frac{1}{|{\cal M}_2|} P_{Z|V}(z|v_i).
\end{eqnarray*}
Thus, by taking average over ${\cal C}_1$ and by using 
Proposition \ref{proposition:single-resolvability}
for input distribution $P_{X|V}(\cdot|v_i)$ instead of $P_X$, we have
\begin{eqnarray*}
\lefteqn{ \mathbb{E}_{{\cal C}_1}\left[ D(P_{\tilde{Z}} \| P_{Z^\prime}) \right] } \\
	&\le& \sum_{i \in {\cal M}_2} \frac{1}{|{\cal M}_2|} 
		\mathbb{E}_{{\cal C}_1}\left[ D(P_{\tilde{Z}|I}(\cdot|i) \| P_{Z|V}(\cdot|v_i)) \right] 
\\
	&\le& \sum_{i \in {\cal M}_2} \frac{1}{|{\cal M}_2|} \frac{1}{\theta |{\cal M}_1|^\theta}
		e^{\psi(\theta|P_{Z|X},P_{X|V}(\cdot|v_i))}. 
\end{eqnarray*}
By taking average over ${\cal C}_2$ and by noting Eq.~(\ref{eq:average-c2-u}), we have
\begin{eqnarray*}
\lefteqn{ \mathbb{E}_{{\cal C}_1 {\cal C}_2}\left[ D(P_{\tilde{Z}} \| P_{Z^\prime}) \right] } \\
	&\le& \sum_v P_V(v) \frac{1}{\theta |{\cal M}_1|^\theta} e^{\psi(\theta|P_{Z|X},P_{X|V}
(\cdot|v))} \\
	&=& \frac{1}{\theta |{\cal M}_1|^\theta} e^{\psi(\theta|P_{Z|X},P_{X|V},P_V)}.
\end{eqnarray*} 

On the other hand, by using Proposition \ref{proposition:single-resolvability}
for input distribution $P_V$ and channel $P_{Z|V}$, we have
\begin{eqnarray*}
\mathbb{E}_{{\cal C}_2}\left[ D(P_{Z^\prime} \| P_Z) \right]
	\le \frac{1}{\theta^\prime |{\cal M}_2|^{\theta^\prime}} e^{\psi(\theta^
\prime|P_{Z|V},P_V)}.
\end{eqnarray*}
\qed

\subsection{Proof of Corollary \ref{corollary:resolvability}}
\label{proof-of-corollary:resolvability}

We can choose $\gamma > 0$ such that
$R_1 \ge I(X;Z|V) + 2 \gamma$ and $R_2 \ge I(V;Z) + 2 \gamma$.
Let $|{\cal M}_{1,n}| = \lfloor e^{nR_1} \rfloor$, $|{\cal M}_{2,n}| = \lfloor e^{nR_2} 
\rfloor$.
Since $\psi^\prime(0| P_{Z|X},P_{X|V},P_V) = I(X;Z|V)$, there
exists $\theta_0 > 0$ such that
\begin{eqnarray*}
\frac{\psi(\theta_0 | P_{Z|X},P_{X|V},P_V)}{\theta_0} \le
 I(X;Z|V) + \gamma \le R_1 - \gamma,
\end{eqnarray*} 
which implies
\begin{eqnarray*}
- \frac{\theta_0}{n} \log |{\cal M}_{1,n}| + \psi(\theta_0 | P_{Z|X}, P_{X|V},P_V) \le - 
\gamma < 0.
\end{eqnarray*}
Thus, the first term of Eq.~(\ref{eq:resolvability-bound-divergence}) converges to $0$
asymptotically.
Similarly, we can show that the second term of Eq.~(\ref{eq:resolvability-bound-divergence})
converges to $0$ asymptotically. 
Thus, we have the assertion of the corollary. \qed

\subsection{Proof of Lemma \ref{lemma:rate-spliting}}
\label{appendix:rate-spliting}

Although the lemma can be systematically proved by
using the Fourier-Motzkin elimination, we explicitly
find $(r_d,r_0,r_s)$ satisfying
\begin{eqnarray}
\label{eq:condition-inner}
(R_d - r_d,R_0 + r_0,R_1 - r_0 - r_s + r_d, R_s + r_s) \in {\cal R}^{(in)}
\end{eqnarray}
for given $(R_d,R_0,R_1,R_s) \in {\cal R}^*$ as follows.

If $R_1 + R_s \le I(V; Y|U)$ and $R_1 < I(V;Z|U)$, we set
\begin{eqnarray*}
r_d := I(V;Z|U) - R_1
\end{eqnarray*}
and $(r_0,r_s) := (0,0)$.
Then, Eqs.~(\ref{eq:inner-1}), and (\ref{eq:inner-4})
are obviously satisfied. Eq.~(\ref{eq:inner-2}) can be confirmed as
\begin{eqnarray*}
R_1 + r_d + R_s 
	&\le& I(V;Z|U) + I(V;Y|U) - I(V;Z|U) \\
	&=& I(V;Y|U),
\end{eqnarray*}
Eq~(\ref{eq:inner-3}) can be confirmed as 
\begin{eqnarray*}
R_0 + R_1 + r_d + R_s 
	&\le& I(U;Y) + I(V;Y|U) \\
	&=& I(V;Y),
\end{eqnarray*}
and Eq.~(\ref{eq:inner-6}) can be confirmed as
\begin{eqnarray*}
R_d - r_d 
	&=& R_d + R_1 - I(V;Z|U) \\
	&\ge& I(X;Z|U) - I(V;Z|U) \\
	&=& I(X;Z|V).
\end{eqnarray*}
Thus, Eq.~(\ref{eq:condition-inner}) holds. 

If $R_1 + R_s \le I(V;Y|U)$ and $R_1 \ge I(V; Z|U)$,
we set $(r_d,r_1,r_s) := (0,0,0)$.
Then, Eq.~(\ref{eq:condition-inner}) obviously holds.

If $R_1 + R_s > I(V; Y|U)$, 
we set $r_d := 0$ and
\begin{eqnarray}
r_s &:=& I(V; Y|U) - I(V;Z|U) - R_s, \nonumber \\
r_0 &:=& R_1 + R_s - I(V;Y|U) \nonumber \\
	&=& (R_1 - r_s) + (R_s + r_s) - I(V;Y|U) \nonumber \\
	&=& R_1 - r_s - I(V;Z|U).
	\label{eq:rate-spliting}
\end{eqnarray}
Then, Eqs.~(\ref{eq:inner-2}), (\ref{eq:inner-3}), 
and (\ref{eq:inner-6}) are obviously satisfied.
Eq.~(\ref{eq:inner-1}) can be confirmed as
\begin{eqnarray*}
R_0 + r_0
	&=& R_0 + R_1 + R_s - I(V;Y|U) \\
	&\le& I(U;Z),
\end{eqnarray*}
and Eq.~(\ref{eq:inner-4}) can be confirmed from Eq.~(\ref{eq:rate-spliting}).
Thus, Eq.~(\ref{eq:condition-inner}) is satisfied.

\subsection{Proof of Lemma \ref{lemma:error-analysis}}
\label{proof-lemma:error-analysis}


\paragraph{Proof of Eq.~(\ref{eq:error-analysis-1})}
We first note the following observations. By taking the average over
randomly generated codes, we have
\begin{eqnarray}
\lefteqn{\mathbb{E}_{{\cal C}_0 {\cal C}_1 {\cal C}_2}[P_{err}(f,g)] } \nonumber \\
	&=& \mathbb{E}_{{\cal C}_0 {\cal C}_1 {\cal C}_2} \left[ 
		\sum_{k,\ell,s,a} \frac{1}{|{\cal K}||{\cal L}||{\cal S}||{\cal A}|} P_{Y|X}({\cal D}_{k
\ell s}^c|x_{k\ell s a}) \right] \nonumber \\
	&=& \mathbb{E}_{{\cal C}_0  {\cal C}_2} \left[ 
		\sum_{k,\ell,s,a} \frac{1}{|{\cal K}||{\cal L}||{\cal S}||{\cal A}|} \mathbb{E}_{{\cal 
C}_1}\left[P_{Y|X}({\cal D}_{k\ell s}^c|x_{k\ell s a})\right] \right] \nonumber \\
	&=& \mathbb{E}_{{\cal C}_0  {\cal C}_2} \left[ 
		\sum_{k,\ell,s} \frac{1}{|{\cal K}||{\cal L}||{\cal S}|} P_{Y|V}({\cal D}_{k\ell s}^c|v_{k
\ell s}) \right]. 
		\label{eq:proof-error-analysis-1}
\end{eqnarray}
Let ${\cal T}_{uv} = \{ y : (u,v,y) \in {\cal T} \}$. Then, we have
\begin{eqnarray*}
\lefteqn{ \mathbb{E}_{{\cal C}_0  {\cal C}_2} \left[ 
		\sum_{k,\ell,s} \frac{1}{|{\cal K}||{\cal L}||{\cal S}|} P_{Y|V}({\cal D}_{k\ell s}^c|v_{k
\ell s}) \right] } \\
	&\le& \mathbb{E}_{{\cal C}_0 {\cal C}_1} \left[ \sum_{k,\ell,s} \frac{1}{|{\cal K}||{\cal 
L}||{\cal S}|} \{
		P_{Y|V}({\cal T}_{u_k v_{k\ell s}}^c|v_{k\ell s}) \right.  \\
	&& \left. + \sum_{(\hat{k},\hat{\ell},\hat{s}) \neq (k,\ell,s)} P_{Y|V}({\cal 
T}_{u_{\hat{k}} v_{\hat{k}\hat{\ell}\hat{s}}}|v_{k\ell s}) \} \right] \\
	&=& \mathbb{E}_{{\cal C}_0 {\cal C}_2} \left[ \sum_{k,\ell,s} \frac{1}{|{\cal K}||{\cal 
L}||{\cal S}|} \left\{
		P_{Y|V}({\cal T}_{u_k v_{k\ell s}}^c|v_{k\ell s}) \right. \right. \\
	&&  + \sum_{(\hat{\ell},\hat{s}) \neq (\ell,s)} P_{Y|V}({\cal T}_{u_k v_{k \hat{\ell} 
\hat{s}}}|v_{k \ell s}) \\
	&&  \left. + \sum_{k \neq \hat{k}} \sum_{\hat{\ell},\hat{s}}  P_{Y|V}({\cal 
T}_{u_{\hat{k}} v_{\hat{k}\hat{\ell}\hat{s}}}|v_{k\ell s}) \} \right] \\
	&\le& \sum_{k,\ell,s} \frac{1}{|{\cal K}||{\cal L}||{\cal S}|} \{
		P_{UVY}({\cal T}^c) \\
	&&	+ |{\cal L}||{\cal S}| \sum_{u,v} P_{UV}(u,v) P_{Y|U}({\cal T}_{uv}|u) \\
	&&	+ |{\cal K}||{\cal L}||{\cal S}| \sum_{u,v}P_{UV}(u,v) P_Y({\cal T}_{uv}) \} \\
	&\le& P_{UVY}({\cal T}^c) + |{\cal L}||{\cal S}|e^{-\alpha_1} + |{\cal K}||{\cal L}||{\cal 
S}| e^{-\alpha_2},
\end{eqnarray*}
where we used
\begin{eqnarray*}
P_{Y|U}(y|u) &\le& P_{Y|V}(y|v) e^{- \alpha_1}, \\
P_Y(y) &\le& P_{Y|V}(y|v) e^{- \alpha_2}
\end{eqnarray*}
for $y \in {\cal T}_{uv}$ in the last inequality.

\paragraph{Proof of Eq.~(\ref{eq:error-analysis-2})}
In a similar manner as Eq.~(\ref{eq:proof-error-analysis-1}), we have
\begin{eqnarray*}
\mathbb{E}_{{\cal C}_0 {\cal C}_1 {\cal C}_2} \left[ P_{err}(f,\phi) \right]
	= \mathbb{E}_{{\cal C}_0 } \left[ \sum_{k} \frac{1}{|{\cal K}|} P_{Z|U}({\cal D}_k^c|u_k) 
\right],
\end{eqnarray*}
which is just a random coding error probability of channel $P_{Z|U}$.
Thus, by the standard arguments of the information spectrum approach \cite{han:book},
we have Eq.~(\ref{eq:error-analysis-2}).

\paragraph{Proof of Eq.~(\ref{eq:error-analysis-3})}

By the monotonicity of the divergence, we have
\begin{eqnarray*}
\lefteqn{ D(P_{S \tilde{Z}} \| P_S \times P_{\tilde{Z}}) } \\
	&\le& D(P_{K S \tilde{Z}} \| P_S \times P_{K \tilde{Z}} ) \\
	&=& \sum_k \frac{1}{|{\cal K}|} D(P_{S\tilde{Z}|K}(\cdot|k) \| P_S \times P_{\tilde{Z}|K}
(\cdot|k)) \\
	&=& \sum_{k,s} \frac{1}{|{\cal K}||{\cal S}|} D(P_{\tilde{Z}|KS}(\cdot|k,s) \| 
P_{\tilde{Z}|K}(\cdot|k)).
\end{eqnarray*}
Note that the relation
\begin{eqnarray*}
\lefteqn{ \sum_s \frac{1}{|{\cal S}|} D(P_{\tilde{Z}|KS}(\cdot|k,s) \| P_{\tilde{Z}|K}(\cdot|k)) } 
\\
	&& + D(P_{\tilde{Z}|K}(\cdot|k) \| P_{Z|U}(\cdot|u_k))  \\
	&=& \sum_s \frac{1}{|{\cal S}|} D(P_{\tilde{Z}|KS}(\cdot|k,s) \| P_{Z|U}(\cdot|u_k))
\end{eqnarray*}
holds for each $k \in {\cal K}$. Thus, by using Lemma \ref{theorem:resolvability} for
$P_{V|U}(\cdot|u_k)$ instead of $P_V$, we have
\begin{eqnarray*}
\lefteqn{ \mathbb{E}_{{\cal C}_1 {\cal C}_2}\left[ D(f) \right] } \\
	&\le& \sum_k \frac{1}{|{\cal K}|} \left[ 
		\frac{1}{\theta |{\cal A}|^\theta} e^{\psi(\theta|P_{Z|X},P_{X|V}, P_{V|U}
(\cdot|u_k))} \right. \\
	&& \left. + \frac{1}{\theta^\prime |{\cal L}|^{\theta^\prime}} e^{\psi(\theta^
\prime|P_{Z|V}, P_{V|U}(\cdot|u_k))} \right].
\end{eqnarray*}
By taking average over ${\cal C}_0$ and by noting 
\begin{eqnarray*}
\mathbb{E}_{{\cal C}_0}\left[ \sum_k \frac{1}{|{\cal K}|} \bol{1}[u_k = u] \right] = P_U(u),
\end{eqnarray*}
we have
\begin{eqnarray*}
\lefteqn{ \mathbb{E}_{{\cal C}_0 {\cal C}_1 {\cal C}_2}\left[ D(f) \right] } \\
	&\le& \sum_u P_U(u) \left[ 
		\frac{1}{\theta |{\cal A}|^\theta} e^{\psi(\theta|P_{Z|X},P_{X|V}, P_{V|U}
(\cdot|u))} \right. \\
	&& \left. + \frac{1}{\theta^\prime |{\cal L}|^{\theta^\prime}} e^{\psi(\theta^
\prime|P_{Z|V}, P_{V|U}(\cdot|u))} \right] \\ 
	&=& \frac{1}{\theta |{\cal A}|^\theta} e^{\psi(\theta|P_{Z|X},P_{X|V},P_{V})} \nonumber 
\\
	&& +\frac{1}{\theta^\prime |{\cal L}|^{\theta^\prime}} e^{\psi(\theta^
\prime|P_{Z|V},P_{V|U},P_U)}.
\end{eqnarray*}
\qed

\subsection{Proof of Corollary \ref{corollary-more-capable}}
\label{proof-of-corollary-main}

By noting that $U$, $V$, $X$, and $(Y,Z)$ form Markov chain, 
which implies $I(V; Y|X,U) = 0$ and $I(V;Z|X,U) = 0$, we have
\begin{eqnarray*}
\lefteqn{
I(V;Y|U) - I(V;Z|U)
} \\
&=& I(V,X; Y|U) - I(V,X; Z|U) \\
	&& - [ I(X; Y|U,V) - I(X;Z|U,V)] \\
&=& I(X; Y|U) - I(X; Z|U) \\
	&& - [ I(X; Y|U,V) - I(X; Z|U,V)].
\end{eqnarray*}
Since $P_{Y|X}$ is more capable than $P_{Z|X}$, we have
\begin{eqnarray*}
I(X; Y|U = u, V=v) - I(X; Z|U=u, V=v) \ge 0
\end{eqnarray*}
for every $(u,v)$, which implies
\begin{eqnarray*}
I(V;Y|U) - I(V;Z|U) \le I(X;Y|U) - I(X;Z|U).
\end{eqnarray*}
Thus, the auxiliary random variable $V$ is not needed.
\qed





\end{document}